\documentclass[letter, 10pt, conference]{ieeeconf}      

\IEEEoverridecommandlockouts                              
\overrideIEEEmargins


\usepackage{amsmath}
\usepackage{amssymb}
\usepackage{graphicx}
\usepackage{xspace}
\usepackage{enumerate}
\usepackage{color}
\usepackage{epstopdf}
 
\newtheorem{theorem}{Theorem}[section]
\newtheorem{lemma}[theorem]{Lemma}
\newtheorem{corollary}[theorem]{Corollary}

\newtheorem{assu}[theorem]{Assumption}

\newtheorem{remark}[theorem]{Remark}

\title{Ultimate Boundedness of Droop Controlled Microgrids \\with
  Secondary Loops}

\author{Rahmat Heidari$^{1}$, Maria M.  Seron$^{1}$, Julio
 H. Braslavsky$^{2}$
 \thanks{$^{1}$ Priority Research Centre for Complex Dynamic Systems
   and Control (CDSC), School of Electrical Engineering and Computer
   Science, The University of Newcastle, Callaghan NSW 2308,
   Australia {\tt\scriptsize 
heidari.rahmat@gmail.com, 
     maria.seron@newcastle.edu.au}} %
 \thanks{$^{2}$ Australian Commonwealth Scientific and Industrial
   Research Organisation (CSIRO),  Energy Flagship,
   PO box 330, Newcastle, NSW 2300, Australia {\tt\scriptsize
     julio.braslavsky@csiro.au}} %
}

\newcommand{\R}{\mathbb{R}}

\newcommand{\Z}{\mathbb{Z}}

\newcommand{\I}{\mathcal{I}}
\newcommand{\J}{\mathcal{J}}

\newcommand{\diag}{\mathrm{diag}}

\newcommand{\bnd}{\mathbf{b}}
\newcommand{\mer}{\hfill$\circ$}

\newcommand{\1}{\mathbf{1}}
\newcommand{\0}{\mathbf{0}}

\begin{document}
\maketitle
\thispagestyle{empty}
\pagestyle{empty}

\begin{abstract}
  In this paper we study theoretical properties of inverter-based
  microgrids controlled via primary and secondary loops.
  Stability of these microgrids has been the subject of a number of 
  recent studies.  Conventional approaches based on standard hierarchical
  control rely on time-scale separation between primary and secondary
  control loops to show local stability of equilibria. In this paper 
  we show that (i) frequency regulation can be ensured
  without assuming time-scale separation and, (ii) ultimate boundedness
  of the trajectories starting inside a region of the state space can be
  guaranteed under a condition on the inverters power injection
  errors. The trajectory ultimate bound can be computed by simple
  iterations of a nonlinear mapping and provides a certificate of the
  overall performance of the controlled microgrid.
\end{abstract}

\section{Introduction}
In the last decade, the need to mitigate the environmental impacts of
coal-fired electricity generation has stimulated a gradual transition
from large centralised energy grids towards small-scale distributed
generation (DG) of power \cite{Ustun2011}.  A common operating regime
for DG is to form microgrids before being connected to the main energy
grid.  A microgrid is a small-scale power system consisting of a
collection of DG units, loads and local storage, operating together
with energy management, control and protection devices and associated
software \cite{Lasseter01,PLT09}.

Control strategies are indispensable to provide stability in
microgrids \cite{PMM06}. Recently, hierarchical control for microgrids
has been proposed in order to standardise their operation and
functionalities \cite{GVMdVC11,Bidram2012}. In this hierarchical
approach, three main control levels are defined to manage voltage and
frequency stability and regulation, and power flow and economic
optimisation. In this paper we focus on the primary and secondary
control levels, which are the main parts of the automatic control
system for the microgrid.

The primary control level deals with the local control loops of the DG
sources.  Many of these sources generate either variable frequency AC
power or DC power, and are interfaced with an AC grid via power
electronic DC/AC inverters.  For inductive lines, inverters are
typically controlled to emulate the droop characteristic of
synchronous generators.  Conventionally, the frequency-active power
(or ``$\omega$-P'' ) droop control \cite{CDA93} is adopted as the
decentralised control strategy for the autonomous active power sharing
at primary level.  Because standard droop control is a purely
proportional control strategy, the secondary control level has the
task of compensating for frequency steady-state errors induced by the
primary control layer.  Although the secondary control level is
conventionally implemented in a centralised fashion, several recent
works have suggested distributed control implementations
\cite{SVG12,ASDJ12,SATS12}.

Stability and convergence properties of droop-controlled networks of
inverters and loads have recently been the focus of the detailed
analyses that highlight the dynamic properties of the power system
\cite{AiG13,ASDJ12,JWSP-FD-FB:12u}.
For example, in \cite{JWSP-FD-FB:12u}, the authors present a necessary 
and sufficient condition for the existence of a unique and locally
exponentially stable steady state equilibrium for a droop-controlled
network.  The paper also proposes a distributed secondary-control
scheme to dynamically regulate the network frequency to a nominal
value while maintaining proportional power sharing among the
inverters, and without assuming time-scale separation between primary
and secondary control loops.  This is in contrast with more
conventional analyses which rely on time-scale separation and do not
discuss stability properties beyond local results around equilibrium
points \cite{JWSP-FD-FB:12u}.

In this paper{\footnote{Preprint. Original version submitted to AuCC`14.}
 we analyse ultimate boundedness of the states of an
inverter-based purely inductive microgrid with decentralised droop
control and secondary control systems.
The network of our study is inherently decentralised as no communication between
neighbouring droop controllers is needed.
Our first contribution is a structured nonlinear model for a microgrid
with embedded primary and secondary control levels.  
By performing a suitable change of coordinates, we show how the
stability analysis for the controlled system is decoupled into a
linear system stability problem, and that of characterising
ultimate boundedness of the trajectories of a perturbed nonlinear
subsystem around steady-state solutions. Our second and main
contribution is then to establish stability properties of the original
nonlinear system by exploiting this model separation.  The linear
analysis shows that frequency regulation
is ensured without the need for time-scale separation.  For the
perturbed nonlinear subsystem, we show that ultimate boundedness of
the trajectories starting inside a region of the state space is
guaranteed under a condition on the power injection errors for the
inverters. The ultimate bounds for the trajectories can be computed by
iterating a well-specified nonlinear map, which provides key
certificates for the overall performance of the controlled microgrid.


\emph{Notation and Definitions:} Let $\mathbf{1}_n$ and $\mathbf{0}_n$
be the $n$-dimensional vectors of unit and zero entries.  Let $\I
\doteq \{1,2,\dots,n\}$
and $\J \doteq \{1,2,\dots,m\}$ be 
index sets of inverter buses and edges, respectively.  For a matrix
$M$, $M_{(i,:)}$, $M_{(:,j)}$, $M_{(i:j,:)}$ and $M_{(i,j)}$ denote
its $i$-th row, $j$-th column, rows $i$ to $j$, and $ij$-th entry,
respectively.  Denote by $B\in\R^{n \times m}$ the \emph{incidence}
matrix of a directed graph such that $B_{(i,j)}=1$ if the node $i$ is
the source of the edge $j$ and $B_{(i,j)}=-1$ if the node $i$ is the
sink node of the edge $j$; all other entries are zero.  The
\emph{Laplacian} matrix is $L=BYB^T$ where
$Y=\mathrm{diag}\{\{a_{ij}\}_{i,j\in\J}\}$, $a_{ij} \doteq y_{ij} E_i
E_j$, $y_{ij}$ denoting the pure imaginary $ij$-th line admittance and
$E_i$ denoting the bus voltage magnitude.  For connected graphs, $\ker
B^T = \ker L = \1_n$. The entries of the $m \times 1$ vector function
$\mathbf{f}=[f(\theta_i-\theta_j)]_{i,j\in\J}$ contain the scalar
function $f(\cdot)$ applied to $(\theta_i-\theta_j)$ in the same order
as the entries in the matrix $Y$. The symbol $\otimes$ denotes the
Kronecker product of matrices.  
$\R_{+0}^n$ denotes the set of real $n$-vectors with nonnegative
components. $\Z_+$ denotes the set of positive integers.
Inequalities and absolute values are taken componentwise.
A nonnegative vector function $T:\R_{+0}^n \to \R_{+0}^n$ is said to
be componentwise non-increasing (CNI) if whenever
$z_1,z_2\in\R_{+0}^n$ and $z_1 \le z_2$, then $T(z_1)\le T(z_2)$.

\section{Decentralised Droop Control Model}
We start by presenting our model of an inverter-based microgrid under
decentralised droop control, and then analyse its structure to reveal
important modal characteristics of the underlying linear part of the
system.  The model is essentially a weighted graph where each node
represents a common-voltage point of power injection, and branches
represent microgrid node-interconnecting lines
\cite{JWSP-FD-FB:12u,AiG13a}.

The standard primary droop control at each inverter $i$ in the
microgrid is such that the deviation in frequency $\dot{\theta}_i$ from
a nominal rated frequency $\omega^*$ is proportional to the power
injection $P_{e,i}$ in the following way:
\begin{equation}
	\label{eq:01}
	d_i \dot{\theta}_i = P_i^* - P_{e,i}
\end{equation}
where $d_i>0$ is the droop controller coefficient, 
$P_i^* \doteq P_{ref,i}-P_{L,i}$ is the inverter power injection error between
the inverter nominal injection setpoint $P_{ref,i}$ and the bus load
$P_{L,i}$, and $\omega_i = \omega^* + \dot\theta_i$ is the frequency of 
the voltage signal at the $i$-th inverter.  By assuming purely (loseless)
inductive lines, the power injection to each bus has the form
\begin{equation}
	\label{eq:02}
	P_{e,i}=\sum_{j=1}^{n} a_{ij} \sin(\theta_i-\theta_j),
\end{equation}
with $a_{ij} \doteq y_{ij} E_i E_j$, $y_{ij}$ denoting the pure
imaginary $ij$-th line admittance and $E_i$ denoting the bus voltage
magnitude. We make the standard \emph{decoupling approximation}
\cite{ZhH13} where all voltage magnitudes $E_i$ are constant so that
the power injection is considered a function of only the phase angles,
that is, $P_{e,i} =P_{e,i}(\theta)$.

The droop controller \eqref{eq:01} results in a static error in the steady
state frequency.  In \cite{AiG13a}, it is shown that as long as the
network state trajectories remain in a specified region, then the
controller in \eqref{eq:01} ensures network synchronisation to the
average frequency error
\begin{align}
	\label{eq:wsync1}
	\omega_{sync}=
		\frac{\sum_{i=1}^n d_i \dot{\theta}_i}{\sum_{i=1}^n d_i}
		= \frac{\sum_{i=1}^n P_i^* }{\sum_{i=1}^n d_i},
\end{align}
where the last equality follows from the fact that $\sum_{j=1}^{n}
P_{e,i} = 0$ for purely inductive lines.

We observe that $\omega_{sync}=0$ if and only if $\sum_{i=1}^n P_i^*=0$
or equivalently $\sum_{i=1}^n P_{ref,i} = \sum_{i=1}^n P_{L,i}$,
that is, the nominal injections are balanced. 
As discussed in~\cite{FD-JWSP-FB:14a}, it is not possible to achieve 
balanced nominal power injections since they depend on generally unknown and
variable load demand. Also, selecting the droop coefficients $d_i$
arbitrary large to make $\omega_{sync}$ small is not realistic.
Thus, complementary control action is required to eliminate  or 
at least reduce the frequency error $\omega_{sync}$; for example, 
by including additional secondary control inputs $p_i$ to each inverter
bus as follows:
\begin{align}
	\label{eq:03}
	d_i \dot{\theta}_i &= P_i^*-\sum_{j=1}^{n} a_{ij}\sin(\theta_i-\theta_j)-p_i,
	\\
	\label{eq:04}
	k_i \dot{p}_i &= \dot{\theta}_i - \epsilon p_i,
\end{align}
for each $i \in \I$ with $k_i,\epsilon>0$.
As shown in \cite{AiG13} and  discussed here in Section 
\ref{subsec:wsyncbndd}, the parameter $\epsilon$ in \eqref{eq:04}
can be tuned to reduce the frequency error. 

\begin{assu}
  In this paper we take all the droop coefficients as well as all the secondary
  control coefficients to be identical, that is, $d_i=d$ and $k_i=k$ for all 
  $i\in\I$.
\end{assu}
The above assumption leads to having a simplified expression for the
average frequency error which is
\begin{equation}
	\label{eq:wsync}
	\omega_{sync} 
	=\frac{\sum_{i=1}^n \dot{\theta}_i}{n}.
\end{equation}

Let $\sin(x)=x+f(x)$ where $f(x) \doteq \sin(x)-x$.
Then, from the definitions of the incidence matrix $B$
and the Laplacian matrix $L=BYB^T$ introduced in Notation and Definitions above, the
system \eqref{eq:03}--\eqref{eq:04} can be expressed as
\begin{equation}
	\label{eq:06}
        \dot{x}=Ax+H \mathbf{f}+\bar{P}
\end{equation}
where $x=[x_\theta^T, x_p^T]^T$, $x_\theta=[\theta_1,\dots,\theta_n]^T$, 
$x_p=[p_1,\dots,p_n]^T$, $\mathbf{f}=[f(\theta_i-\theta_j)]_{i,j\in\J}$ and the 
matrices
\begin{equation}
	\label{eq:07}
	A=\begin{bmatrix}
		\frac{-1}{d}  L  & \frac{-1}{d}  I_n \\[2mm]
		\frac{-1}{dk} L  & \frac{-e}{dk} I_n
	\end{bmatrix},
	H=\begin{bmatrix} 
		\frac{-1}{d} B Y  \\[2mm] \frac{-1}{dk} B Y
	  \end{bmatrix},
	\bar{P}=\begin{bmatrix} 
		\frac{1}{d} P^* \\[2mm] \frac{1}{dk} P^*
	\end{bmatrix}
\end{equation}
where $P^*=[P_1^*\; \dots \; P_n^*]^T$ and $e \doteq 1+\epsilon d$.

Let $(\mu_i,u_i)$, $i\in \mathcal{I}$ be the eigenvalue-eigenvector
pairs of the Laplacian matrix $L$ and define the associated eigenvalue
and eigenvector matrices as
\begin{align}
	\label{eq:08}
	M \doteq \diag \{\mu_1, \dots, \mu_n\}, \quad
	U \doteq [u_1\; \dots \; u_n].
\end{align}

The following properties of the Laplacian eigenstructure will
be useful for our later developments.

\begin{lemma}(\emph{Properties of the Laplacian eigenstructure})
  \label{lem:Wprop}
  The Laplacian eigenstructure~\eqref{eq:08} has the
  following properties:
  \begin{enumerate}[(a)]
  \item \label{item1} $\mu_1=0$ and $U_{(:,1)}=u_i=\1_n$ (due to the
    fact that $\ker B^T = \ker L = \1_n$ for connected graphs).
  	\item \label{item2} $[U^{-1}]_{(1,:)}=\1_n^T/n $ (since 
  			$[U^{-1}]_{(1,:)} U_{(:,1)}=[U^{-1}]_{(1,:)} \1_n = 1$).
  	\item \label{item3} $\sum_{i=1}^n U_{(i,j)}=0$, i.e. 
  			$\1_n^T U_{(:,j)}=0$ for $j=2,\dots,n$ (since 
  			$[U^{-1}]_{(1,:)} U_{(:,j)}=\1_n^T U_{(:,j)}/n = 0$).
    \item \label{item4} $\sum_{j=1}^n [U^{-1}]_{(i,j)}=0$, i.e.  
    		$[U^{-1}]_{(i,:)} \1_n =0$ for $i=2,\dots,n$ (since
            $[U^{-1}]_{(i,:)} U_{(:,1)}=[U^{-1}]_{(i,:)}\1_n =0$).
  \end{enumerate}
  \mer
\end{lemma}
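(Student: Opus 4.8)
The plan is to treat these as the standard spectral facts for the symmetric graph Laplacian of a connected network: I would derive (a) from the kernel of $L$, and then extract (b)--(d) from the defining identity $U^{-1}U = I_n$ together with the symmetry of $L$. The parentheticals in the statement already indicate the intended algebraic relations, so the task is mainly to supply the structural reason that makes each one hold.

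First I would establish (a). Since every column of the incidence matrix $B$ contains exactly one $+1$ and one $-1$, we have $B^T\1_n = \0_m$, and hence $L\1_n = BYB^T\1_n = \0_n$. Thus $\1_n$ spans $\ker L$, which is one-dimensional for a connected graph, and we may order the spectrum so that $\mu_1 = 0$ with $u_1 = \1_n$. For (b) I would use that the rows of $U^{-1}$ are the left eigenvectors of $L$: from $U^{-1}LU = M$ one gets $[U^{-1}]_{(i,:)}L = \mu_i\,[U^{-1}]_{(i,:)}$, so the first row is a left eigenvector for $\mu_1 = 0$. Because $L$ is symmetric, its left null space is the transpose of its right null space $\spn\{\1_n\}$, forcing $[U^{-1}]_{(1,:)} = c\,\1_n^T$ for some scalar $c$; the normalisation $[U^{-1}]_{(1,:)}U_{(:,1)} = [U^{-1}]_{(1,:)}\1_n = 1$ then fixes $cn = 1$, i.e. $c = 1/n$.

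Claims (c) and (d) would then follow simply by reading off off-diagonal entries of $U^{-1}U = I_n$. For $j \ge 2$ the $(1,j)$ entry gives $[U^{-1}]_{(1,:)}U_{(:,j)} = \1_n^T U_{(:,j)}/n = 0$, which is (c); for $i \ge 2$ the $(i,1)$ entry gives $[U^{-1}]_{(i,:)}U_{(:,1)} = [U^{-1}]_{(i,:)}\1_n = 0$, which is (d).

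The only step that is more than bookkeeping is (b): the single scalar relation quoted in the statement, $[U^{-1}]_{(1,:)}\1_n = 1$, does not on its own determine the whole first row of $U^{-1}$. The substantive ingredient is the symmetry argument identifying the left null space of $L$ with $\spn\{\1_n^T\}$, which in turn relies on the zero eigenvalue being \emph{simple} --- that is, on connectedness of the graph. Once (a) and (b) are in hand, the remaining identities are immediate consequences of $U^{-1}U = I_n$.
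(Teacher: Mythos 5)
Your proof is correct and follows essentially the same route as the paper, which gives no separate proof but justifies each item by the parenthetical identities in the statement (items (a) from $\ker B^T=\ker L=\spn\{\1_n\}$, items (c) and (d) by reading off entries of $U^{-1}U=I_n$). The one place you go beyond the paper is item (b), and rightly so: the paper's parenthetical $[U^{-1}]_{(1,:)}\1_n=1$ is a single scalar constraint and cannot by itself determine the whole row, whereas your argument --- the first row of $U^{-1}$ is a left eigenvector of $L$ for the simple eigenvalue $0$, and symmetry of $L$ (or directly $\1_n^TL=(B^T\1_n)^TYB^T=\0_n^T$) identifies the left null space with $\spn\{\1_n^T\}$, after which the normalisation fixes the constant $1/n$ --- supplies the missing ingredient.
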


The eigenstructure of the system \eqref{eq:06}--\eqref{eq:07} can be
conveniently represented in terms of the eigenstructure of the
Laplacian matrix, as shown in the following theorem.
\begin{theorem}
  \label{thm:A_eigen}
  For system \eqref{eq:06}--\eqref{eq:07}, the eigenvalues $\lambda_i$
  and eigenvectors $v_i$ of the matrix $A$ have the form
  
  \begin{align}
	\label{eq:11}
	\lambda_{2i-1,2i}&=
	  \begin{smallmatrix} 
	    -\frac{e+\mu_i k \mp R_i}{2dk}
	  \end{smallmatrix},
	R_i \doteq 
	  \begin{smallmatrix} 
	    \sqrt{4\mu_i k + (e-\mu_i k)^2},  i \in \I
	  \end{smallmatrix}
	\\
	\label{eq:12}
	[v_1\; v_2]&=\left[\begin{smallmatrix}
	   1 & \frac{k}{e} \\ 0 & 1 \end{smallmatrix}\right] \otimes u_1, \\
	\label{eq:13}
	[v_{2i-1}\;v_{2i}]
	&= \left[\begin{smallmatrix}
		\frac{e+dk\lambda_{2i-1}}{\mu_i} & \frac{e+dk\lambda_{2i}}{\mu_i} \\
		-1 & -1
	\end{smallmatrix}\right] \otimes u_i, 
	\; i \in \I-\{1\}
  \end{align}
  where $\mu_i$ and $u_i$'s are eigenvalues and eigenvectors of the Laplacian
  matrix $L$, respectively.
\end{theorem}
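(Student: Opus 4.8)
The plan is to exploit the block structure of $A$: each of its four $n\times n$ blocks is a scalar multiple of either $L$ or $I_n$, so $A$ leaves invariant every two-dimensional subspace $\spn\{e_1\otimes u_i,\;e_2\otimes u_i\}$, where $e_1,e_2$ are the standard basis of $\R^2$ and $u_i$ is an eigenvector of $L$. Concretely, I would substitute a trial eigenvector $v=c\otimes u_i$ with $c=[\alpha\;\;\beta]^T\in\R^2$ into $Av=\lambda v$ and use $Lu_i=\mu_i u_i$. The $L$-blocks then contribute the scalar $\mu_i$, and the full $2n\times2n$ problem collapses to the $2\times2$ eigenvalue problem $A_i c=\lambda c$ with
\[
A_i=\begin{bmatrix} -\tfrac{\mu_i}{d} & -\tfrac{1}{d}\\[1mm] -\tfrac{\mu_i}{dk} & -\tfrac{e}{dk}\end{bmatrix}.
\]

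Next I would read off the spectrum of $A_i$ from its trace $-\tfrac{\mu_i k+e}{dk}$ and determinant $\tfrac{\mu_i(e-1)}{d^2k}$. Substituting the identity $e-1=\epsilon d$ and simplifying the discriminant of the characteristic polynomial reduces $(\mu_i k+e)^2-4\mu_i\epsilon dk$ to $(e-\mu_i k)^2+4\mu_i k=R_i^2$, which delivers exactly the eigenvalues $\lambda_{2i-1,2i}=-\tfrac{e+\mu_i k\mp R_i}{2dk}$ of \eqref{eq:11}. For the eigenvectors, solving the second scalar row of $(A_i-\lambda I)c=0$ gives $\alpha=\tfrac{e+dk\lambda}{\mu_i}\,\beta$; normalising $\beta=-1$ recovers the two columns of \eqref{eq:13}. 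Tensoring these reduced vectors with $u_i$ then yields $v_{2i-1},v_{2i}$, settling the generic case $i\in\I-\{1\}$.

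The one genuinely separate case is the zero Laplacian eigenvalue $\mu_1=0$ (simple, since the graph is connected, with $u_1=\1_n$ by Lemma \ref{lem:Wprop}(a)), because the generic formula divides by $\mu_i$. Here $A_1$ becomes block upper-triangular, its eigenvalues degenerate to $\lambda_1=0$ and $\lambda_2=-\tfrac{e}{dk}$ (consistent with $R_1=e$), and solving the reduced system directly produces the eigenvector columns $[1\;\;0]^T$ and $[\tfrac{k}{e}\;\;1]^T$ appearing in \eqref{eq:12}.

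Finally I would argue completeness. Since $L$ is a real symmetric Laplacian it is diagonalisable, so $\{u_i\}_{i\in\I}$ is a basis of $\R^n$; and within each block $R_i>0$, because $\mu_i\ge0$, $k>0$ and $e=1+\epsilon d>0$ force $4\mu_i k+(e-\mu_i k)^2>0$, so the two reduced eigenvectors are linearly independent. Hence the $2n$ vectors $\{c\otimes u_i\}$ constitute an eigenbasis of $A$, confirming that \eqref{eq:11}--\eqref{eq:13} exhaust its eigenstructure. The main point to watch is precisely this $\mu_1=0$ degeneracy together with the verification that $R_i\neq0$ in every block; the remainder is the routine discriminant computation driven by $e-1=\epsilon d$.
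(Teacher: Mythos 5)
Your proposal is correct, and it reaches the result by a genuinely different route than the paper. The paper's proof is a pure verification: it eliminates $v_{i,p}$ between the two block rows of $(A-\lambda_i I)v_i=0$ to obtain \eqref{eq:16}--\eqref{eq:17}, then substitutes the candidate pairs from the theorem statement and checks that everything cancels, splitting into the $\mu_1=0$ case and the generic case exactly as you do. You instead \emph{derive} the eigenstructure: since every $n\times n$ block of $A$ is a scalar multiple of $L$ or $I_n$, the two-dimensional subspace associated with each Laplacian eigenvector $u_i$ is $A$-invariant, the problem collapses to the $2\times 2$ matrix $A_i$, and the trace/determinant/discriminant computation (with $e-1=\epsilon d$ turning $(\mu_i k+e)^2-4\mu_i k(e-1)$ into $R_i^2$) \emph{produces} \eqref{eq:11}--\eqref{eq:13} rather than merely confirming them; this is the same algebra as the paper's cancellation $\tfrac{1}{4k}\bigl(R_i^2-(e-\mu_i k)^2\bigr)=\mu_i$, just run forwards. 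Your route also buys something the paper leaves implicit: the completeness argument ($L$ real symmetric hence diagonalisable, and $R_i>0$ in every block so the two reduced eigenvectors are independent) shows that the listed $2n$ pairs exhaust the spectrum and that $V$ is invertible --- a fact the paper later relies on for the modal transformation $x=Vz$ without comment. One trivial slip to fix: the second scalar row of $(A_i-\lambda I)c=0$ gives $\alpha=-\tfrac{e+dk\lambda}{\mu_i}\beta$, not $\alpha=\tfrac{e+dk\lambda}{\mu_i}\beta$; with your normalisation $\beta=-1$ it is the former relation that yields the columns of \eqref{eq:13}, so the stated conclusion is unaffected.
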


\begin{proof}
  An eigenvalue-eigenvector pair of the matrix $A$ satisfies 
  $(A-\lambda_i I_n)v_i=0$, that is, from \eqref{eq:07},
  \begin{align}
  	\label{eq:15}
  	\begin{bmatrix}
  		(-1/d)L-\lambda_i I_n  & (-1/d)I_n \\
  		(-1/dk)L & (-e/dk - \lambda_i)I_n
  	\end{bmatrix}
  	\begin{bmatrix} v_{i,\theta} \\ v_{i,p} \end{bmatrix}
  	= \0_{2n}
  	\end{align}
        where each eigenvector in \eqref{eq:12} and \eqref{eq:13} is
        partitioned into two $n \times 1$ vectors according to the
        structure of~$A$. Then, from the first $n$ rows of
        \eqref{eq:15}, $v_{i,p}$ can be written as
  \begin{equation}
  	\label{eq:16}
  	v_{i,p}=-(L+d\lambda_i I_n) v_{i,\theta},
  \end{equation}
  and hence, from the second group of $n$ rows in \eqref{eq:15} we
  obtain
  \begin{equation}
  	\label{eq:17}
  	\left(-L+(e+dk\lambda_i)(L+d\lambda_i I_n)\right) v_{i,\theta} = \0_n.
  \end{equation}
  We consider the eigenvalue-eigenvector pairs proposed in the
  statement of the theorem. The proof proceeds by first verifying that
  $(\lambda_i,v_{i,\theta})$ satisfy~\eqref{eq:17}.
  Then, the affirmed pair is replaced in \eqref{eq:16} to confirm the
  proposed expression for~$v_{i,p}$. 
  
  We first investigate the eigenstructure related to the first two
  eigenvalues where substituting $\mu_1=0$ [see
  Lemma~\ref{lem:Wprop}(\ref{item1})] into \eqref{eq:11} yields
  $\lambda_1=0$ and $\lambda_2=-e/(dk)$.
  
  Associated with $\lambda_1=0$ is the eigenvector
  $v_1=[v_{1,\theta}^T\;v_{1,p}^T]^T=[u_1^T\;\0_n^T]^T$. 
  Replacing $(\lambda_1,v_{1,\theta})$ in \eqref{eq:17} yields
  \begin{equation*}
  	(-L+Le) u_1 = (e-1) L u_1 = \0_n,
  \end{equation*}
  where the above is obtained on account of $L u_1 = \mu_1 u_1 =
  \0_n$.  Then, from \eqref{eq:16} we have $v_{1,p}=-L u_1 = \0_n$
  which confirms the validity of $(\lambda_1,v_1)$ as an
  eigenvalue-eigenvector pair of the matrix $A$.
  
  For the second eigenvalue of the matrix $A$, from \eqref{eq:12},
  corresponding to $\lambda_2=-e/(dk)$ we have $v_2=[v_{2,\theta}^T \;
  v_{2,p}^T]^T=[(k/e)u_1^T\;u_1^T]^T$.  Substituting
  $(\lambda_2,v_{2,\theta})$ into \eqref{eq:17} results in
  \begin{multline*}
  	[-L+\underbrace{(e+dk(-e/dk))}_0(L+d(-e/dk) I_n)] u_1 (k/e) \\
  	=-L u_1 (k/e) = \0_n
  \end{multline*}
  where we used $L u_1 = \mu_1 u_1 = 0$. From \eqref{eq:16}, $v_{2,p}$ is then
  \begin{align*}
  	v_{2,p}&=-(L+d(-e/dk) I_n) u_1 (k/e) \\
  	&= -L u_1 (k/e) + (e/k)u_1(k/e) = u_1
  \end{align*}
  which together with $v_{2,\theta}$ show the validity of~$(\lambda_2,v_2)$.
  
  Next, we show that for the remaining eigenvalues, the eigenvectors proposed
  in~\eqref{eq:13} satisfy \eqref{eq:16} and \eqref{eq:17}.  For
  simplicity we drop the subindex $i$ and write
  \begin{align}
  	\label{eq:18}
  	\lambda&=-\frac{e+\mu k \mp R}{2dk},\quad
  	R=\sqrt{4\mu k + (e-\mu k)^2} \\
  	\label{eq:19}
  	v&=\begin{bmatrix}
  		 u (e+dk\lambda)/\mu  \\ -u
  	\end{bmatrix}=
  	\begin{bmatrix}
  		 v_\theta \\ v_p
  	\end{bmatrix}.
  \end{align}
  Substituting the pair $(\lambda,v_\theta)$ into \eqref{eq:17} and
  disregarding the coefficient of $u$ in $v_\theta$ as it has no
  impact on the zero result lead to
  \begin{align}
  	\notag &(-L+(e+dk\lambda)(L+d\lambda I_n)) u \\
  	\notag &=-L u + (e + dk \lambda) (Lu + d \lambda u) \\
  	\notag &=-\mu u + (e + dk \lambda) (\mu u + d \lambda u) \\
  	\notag &=-\mu u + (e + dk \lambda) (\mu + d \lambda) u \\
  	\notag &=-\mu u + (\frac{e-\mu k \pm R}{2})
  			    (\frac{-e+\mu k \pm R}{2k}) u \\
  	\notag &=-\mu u + \frac{1}{4k} ( R^2 - (e-\mu k)^2) u \\
  	\notag &=-\mu u + \frac{1}{4k} ( 4 \mu k + (e-\mu k)^2) - (e-\mu k)^2) u \\
  	 &=-\mu u + \frac{1}{4k} ( 4 \mu k) u = \0_n.
  	\label{eq:20}
  \end{align}
  Then, \eqref{eq:16} is proven as follows:
  \begin{align}
    \notag v_p &= -(L+d \lambda I_n)v_\theta
    \\
    \notag &= -(Lu + d \lambda u) \frac{(e+dk\lambda)}{\mu} \\
    \notag &= -(\mu u + d \lambda u) \frac{(e+dk\lambda)}{\mu} \\
  	\label{eq:21} 
  	&= -\underbrace{(\mu + d \lambda) (e+dk\lambda)}_{\mu} \frac{u}{\mu}
  	= -u.
  \end{align}
  From \eqref{eq:20} and \eqref{eq:21}, it is clear that $(\lambda,v)$
  in \eqref{eq:18}, \eqref{eq:19} is an eigenvalue-eigenvector pair of
  the matrix $A$.
\end{proof}

Theorem~\ref{thm:A_eigen} derived expressions for the eigenvalues and
eigenvectors of the matrix $A$ in the microgrid
model~\eqref{eq:06}--\eqref{eq:07}. 
Through the obtained eigenstructure of the microgrid, one can exploit
a change into modal coordinates to investigate the system stability
properties. Define the associated matrices
\begin{align*}
	\Lambda &\doteq \diag \{ \lambda_1,\dots,\lambda_{2n} \}, \quad
	V \doteq \begin{bmatrix} v_1 & \dots & v_{2n} \end{bmatrix}.
\end{align*} 
We then consider the state transformation $x=Vz$.
From~\eqref{eq:06}--\eqref{eq:07} and noting that $\Lambda=V^{-1}AV$,
the transformed state $z$ satisfies
\begin{equation}
	\label{eq:22}
	\dot{z}=\Lambda z + V^{-1} H \mathbf{f} + V^{-1} \bar{P}
\end{equation}
where, by direct computation, 
\begin{equation}
	\label{eq:23}
	V^{-1}H = \Gamma U_H, \quad
	V^{-1}\bar{P} = -\Gamma U_P,
\end{equation}
with
\begin{align}
	\label{eq:24}
	\Gamma&=\mathrm{diag}\{
	  \begin{smallmatrix}
	     e-1,-\lambda_2,\lambda_3,-\lambda_4,\dots,
			\lambda_{2n-1},-\lambda_{2n}
	  \end{smallmatrix}\}, \\
	\label{eq:25}
	U_H&= u_h \otimes \left[\begin{smallmatrix} 1 \\ 1 \end{smallmatrix}\right]
	, \qquad 
	U_P= u_p \otimes \left[\begin{smallmatrix} 1 \\ 1 \end{smallmatrix}\right]
	\\
	\label{eq:27}
    u_h &= R^{-1} U^{-1} B Y
    , \quad
    u_p = R^{-1} U^{-1} P^*
\end{align}
where $u_h\in \R^{n \times m}$, $u_p\in \R^{n \times 1}$ and
$R=\mathrm{diag}\{R_i\}_{i\in\I}$.
We will show in the following section that the transformed
model~\eqref{eq:22}--\eqref{eq:27} has a special structure convenient for stability analysis.

\section{Stability Analysis}

The closed-loop system \eqref{eq:22} can be regarded as a linear
system with a nonlinear `perturbation' term (the second term) affected
by bounded disturbances (the third term).  Under certain conditions on
the nonlinear term one can expect the linear part of the dynamics to
dominate and, if the latter is stable, ultimately boundedness of the
trajectories starting inside a region of the state space may be
achieved~\cite{Kha02}.
In this regard, in this section we start by addressing the stability
of the linear part of system \eqref{eq:22} and follow progressive
steps to finally establish the ultimate boundedness of the
trajectories of the full nonlinear system, thus providing stability
conditions that go beyond local stability around the equilibrium
point.
It is worth noting that another analysis that considers a model
including nonlinearities in power systems has been presented 
in \cite{siljak2004}.

\subsection{Stability of the System's Linear Part}
\label{subsec:LinSt}
To begin with, the stability of the linear part of
system \eqref{eq:22} is established by analysing its  eigenvalues.
\begin{lemma}
  \label{lem:A_eval}
  The matrix $A$ in~\eqref{eq:06} (equivalently, $\Lambda$ in
  \eqref{eq:22}) has stable (real negative) eigenvalues, except for
  $\lambda_1=0$ which represents the rotational symmetry of the
  system.
\end{lemma}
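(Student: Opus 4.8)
The plan is to work directly from the closed-form eigenvalue expressions established in Theorem~\ref{thm:A_eigen}, namely
\begin{equation*}
\lambda_{2i-1,2i} = -\frac{e+\mu_i k \mp R_i}{2dk}, \qquad R_i = \sqrt{4\mu_i k + (e-\mu_i k)^2},
\end{equation*}
and to exploit the known spectrum of the Laplacian together with the sign constraints $d,k,\epsilon>0$, which give $e=1+\epsilon d>1$. Since $L=BYB^T$ is the Laplacian of a connected graph (with positive weights $a_{ij}$), it is positive semidefinite with a simple zero eigenvalue $\mu_1=0$, whence $\mu_i>0$ for $i\in\I-\{1\}$. I would organise the argument into three parts: reality of all eigenvalues, the special case $\mu_1=0$, and strict negativity for the remaining indices.

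First, reality. Because $k>0$ and $\mu_i\ge 0$, the radicand $4\mu_i k + (e-\mu_i k)^2$ is a sum of two nonnegative terms, so $R_i$ is real and nonnegative for every $i$; hence all $\lambda_{2i-1,2i}$ are real. Next, the rotational mode: for $\mu_1=0$ we obtain $R_1=\sqrt{e^2}=e$, so $\lambda_1=-(e-e)/(2dk)=0$ and $\lambda_2=-(e+e)/(2dk)=-e/(dk)<0$. Thus $\lambda_1=0$ is the single zero eigenvalue associated with the $\1_n$ direction (the rotational symmetry $x_\theta\mapsto x_\theta+c\1_n$), consistent with the steady-state frequency analysis, while its partner is strictly negative.

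Third, negativity for $i\in\I-\{1\}$. The eigenvalue $\lambda_{2i}=-(e+\mu_i k + R_i)/(2dk)$ has a strictly positive numerator ($e>0$, $\mu_i k>0$, $R_i\ge 0$) over $2dk>0$, hence $\lambda_{2i}<0$. For $\lambda_{2i-1}=-(e+\mu_i k - R_i)/(2dk)$ I must show $e+\mu_i k - R_i>0$. Since $e+\mu_i k>0$ and $R_i\ge 0$, this is equivalent to
\begin{equation*}
(e+\mu_i k)^2 > R_i^2 = 4\mu_i k + (e-\mu_i k)^2,
\end{equation*}
i.e.\ to $(e+\mu_i k)^2-(e-\mu_i k)^2 = 4e\mu_i k > 4\mu_i k$, which reduces to $(e-1)\mu_i k>0$.

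This last inequality is the crux of the lemma, and it holds precisely because $e-1=\epsilon d>0$ and $\mu_i>0$ for $i\ge 2$, giving $\lambda_{2i-1}<0$. I expect it to be the main obstacle in the sense that it is the only nonvacuous step and the one carrying the modelling content: it is exactly where the secondary-loop gain (through $e>1$) and the graph connectivity (through $\mu_i>0$) are both indispensable. Dropping the secondary loop ($\epsilon=0$, so $e=1$) or losing connectivity would make some $(e-1)\mu_i k$ vanish, producing a second nonnegative eigenvalue and destroying the strict stability of the nontrivial modes. Collecting the three parts establishes that every eigenvalue of $A$ (equivalently of $\Lambda$) is real and strictly negative except for the simple $\lambda_1=0$, as claimed.
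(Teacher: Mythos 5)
Your proposal is correct and follows essentially the same route as the paper: both start from the closed-form eigenvalues of Theorem~\ref{thm:A_eigen}, use positive semidefiniteness of the Laplacian (with $\mu_1=0$ simple by connectivity), and reduce strict negativity of the nonzero eigenvalues to the inequality $4\mu_i k < 4e\mu_i k$, i.e.\ $e=1+\epsilon d>1$. Your version is slightly more careful in separating the $\pm R_i$ branches and in noting reality of the radicand, but the substance is identical.
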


\begin{proof}
  As can be seen in \eqref{eq:11}, the eigenvalues of the matrix $A$
  are functions of the eigenvalues of the Laplacian matrix $L$. It is
  well-known that the Laplacian matrix is a positive semi-definite
  matrix and hence, its eigenvalues $\mu_i$ are positive except for
  the zero eigenvalue $\mu_1=0$ representing the rotational symmetry.
  
  Each eigenvalue $\lambda_i$, $i \ne 1$ is stable if and only if
  \begin{align*}
  	-\frac{e+\mu_i k \mp \sqrt{4\mu_i k + (e-\mu_i k)^2}}{2dk} <
        0 \quad &\iff \\
  	\mp \sqrt{4\mu_i k + (e-\mu_i k)^2} < e+\mu_i k \quad &\iff \\
  	4\mu_i k + (e-\mu_i k)^2 < (e+\mu_i k)^2 \quad &\iff \\
  	4\mu_i k < 4\mu_i k e \quad &\iff
  	1 < e ,
  \end{align*}
  which is always true since $e=1+\epsilon d >1$ for $\epsilon,d~>~0$.
  Therefore, apart from the zero eigenvalue $\lambda_1=0$, prevalent
  to systems with the Laplacian matrix representation, the eigenvalues of the matrix $A$ are real
  negative numbers, thus stable.
\end{proof}

\subsection{Model Decoupling Property}
\label{subsec:decoup}
In view of facilitating the stability analysis, the structure of the
closed-loop system \eqref{eq:22}--\eqref{eq:27} can be unfolded one
step further by using a property of the eigenvector matrix of the
Laplacian $L$, as per the following remark.
\begin{remark}
  \label{rem:uhup}
  According to Lemma~\ref{lem:Wprop}\eqref{item2}, the first rows of the 
  matrices $u_h$ and $u_p$ in \eqref{eq:27}--\eqref{eq:27} are, respectively,
  \begin{itemize}
	\item	${u_h}_{(1,:)}=[U^{-1}]_{(1,:)} B Y /R_1 =\0_m^T$,
	\item	${u_p}_{(1)}=[U^{-1}]_{(1,:)} P^*/R_1 = (\sum_{i=1}^n P_i^*)/ne$,
  \end{itemize}
  where we have also used the structure of the incidence matrix~$B$
  and $R_1=e$ (see~\eqref{eq:11} for $\mu_1=0$).  \mer
\end{remark}

Letting $z=[z_1\ z_2\ \hat{z}^T]^T$, $\hat{z}=[z_3\ \dots \
z_{2n}]^T$, using~\eqref{eq:24} with $\lambda_2=-e/(dk)$, and
exploiting Remark~\ref{rem:uhup}, we have
\begin{align}
	\label{eq:28}
	\left[\begin{matrix} \dot{z}_1 \\ \dot{z}_2 \end{matrix}\right]
	&=\left[\begin{matrix} 
		0 & 0 \\ 0 & \lambda_2 
	 \end{matrix}\right]
	 \left[\begin{matrix} z_1 \\ z_2 \end{matrix}\right]
	+\left[\begin{matrix} (e-1)/(de) \\ 1/(dk) \end{matrix}\right]
	 \begin{matrix} \frac{(\sum_{i=1}^n P_i^*)}{n} \end{matrix}
	\\
	\label{eq:29}
	\dot{\hat{z}} &= \hat{\Lambda} \hat{z}+ \hat{\Gamma} (
        \hat{U}_H \mathbf{f} - \hat{U}_P )
\end{align}
where
\begin{align}
  \label{eq:Lhat}
  &\hat{\Lambda}=\diag(\lambda_3,\lambda_4, \dots,\lambda_{2n}), \\
  \label{eq:Ghat}
  &\hat{\Gamma}=\diag(\lambda_3,-\lambda_4,\dots,\lambda_{2n-1},-\lambda_{2n}), \\
  &\hat{U}_H=[U_H]_{(3:2n,:)}, \quad \hat{U}_P=[U_P]_{(3:2n)}. \label{eq:Uhat}
\end{align}

In the next step, the two subsystems \eqref{eq:28} and \eqref{eq:29}
are shown to be decoupled from each other. To this purpose, we study
the dependency of the function $\mathbf{f}$ on the $z$ states.

\begin{lemma}
  \label{lem:decoup_sys}
  The system \eqref{eq:29} consisting of the last $2n-2$ $z$ states is
  decoupled from the system \eqref{eq:28}.
\end{lemma}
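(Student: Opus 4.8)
The plan is to show that the right-hand side of \eqref{eq:29} is independent of $z_1$ and $z_2$, so that $\hat{z}$ evolves autonomously. Inspecting \eqref{eq:29}, the term $\hat{\Lambda}\hat{z}$ involves only $\hat{z}$ because $\hat{\Lambda}$ is diagonal, and $\hat{U}_P$ is a constant vector; hence the only channel through which $z_1$ or $z_2$ could enter is the nonlinearity $\mathbf{f}$. It therefore suffices to prove that $\mathbf{f}$ does not depend on $z_1,z_2$.

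First I would rewrite $\mathbf{f}$ in terms of the edge phase differences. Since the entries of $\mathbf{f}=[f(\theta_i-\theta_j)]_{i,j\in\J}$ are indexed by the edges in the same order as $Y$, and $B^T x_\theta$ is precisely the vector of phase differences $\theta_i-\theta_j$ across the edges, the vector $\mathbf{f}$ is a function of $B^T x_\theta$ alone. Next, using the modal transformation $x=Vz$, I would express $x_\theta$ (the first $n$ entries of $x$) as $x_\theta=\sum_{k=1}^{2n} v_{k,\theta}\,z_k$, where $v_{k,\theta}$ denotes the $\theta$-block of the eigenvector $v_k$.

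The key observation is that the contributions of $z_1$ and $z_2$ to $x_\theta$ lie in the kernel of $B^T$. From \eqref{eq:12} together with Lemma~\ref{lem:Wprop}\eqref{item1} (which gives $u_1=\1_n$), the $\theta$-blocks of the first two eigenvectors are $v_{1,\theta}=u_1=\1_n$ and $v_{2,\theta}=(k/e)u_1=(k/e)\1_n$, both proportional to $\1_n$. Hence the $z_1,z_2$ part of $x_\theta$ equals $(z_1+(k/e)z_2)\1_n$, and applying $B^T$ annihilates it because $\ker B^T=\1_n$ for connected graphs. Consequently $B^T x_\theta=B^T\sum_{k\ge 3} v_{k,\theta}\,z_k$ depends only on $\hat{z}$, and so does $\mathbf{f}$. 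Substituting this back, the right-hand side of \eqref{eq:29} becomes a function of $\hat{z}$ only, which proves that \eqref{eq:29} is decoupled from \eqref{eq:28}.

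The only subtle point---really the crux of the argument---is recognising that the nonlinearity enters the $\theta$ dynamics solely through the phase differences $B^T x_\theta$, and that the two averaging modes $z_1,z_2$ excite $x_\theta$ exclusively along the synchronisation direction $\1_n$, which $B^T$ removes. Everything else reduces to reading off $v_{1,\theta}$ and $v_{2,\theta}$ from Theorem~\ref{thm:A_eigen} and invoking Lemma~\ref{lem:Wprop}; no further computation is needed.
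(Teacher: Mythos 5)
Your proof is correct and follows essentially the same route as the paper: both arguments reduce to showing that the first two columns of $B^T V_\theta$ vanish because $v_{1,\theta}$ and $v_{2,\theta}$ are proportional to $u_1=\1_n$, which $B^T$ annihilates. The only cosmetic difference is that you invoke $\ker B^T=\mathrm{span}\{\1_n\}$ directly, whereas the paper appeals to the two-nonzero-entry row structure of $B^T$; these are the same fact.
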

\begin{proof}
  From $x=Vz$, if the matrix $V$ with columns given
  by~\eqref{eq:12}--\eqref{eq:13} is partitioned as
  \begin{align}
  	\label{eq:33}
	V=\left[\begin{smallmatrix} V_\theta \\ V_p \end{smallmatrix}\right]
	=\left[\begin{smallmatrix}
		u_1 & (k/e) u_1 & \dots \\
		0   & u_1 & \dots
	\end{smallmatrix}\right],
  \end{align}
  where $V_\theta,V_p \in \R^{n \times 2n}$, yields $x_\theta=V_\theta
  z$. Then we have, using the structure of the incidence matrix~$B$,
  \begin{equation}
	\label{eq:34}
	[\theta_i-\theta_j]_{i,j\in\J} = B^T V_\theta z.
  \end{equation}
  Using $u_1=\1_n$ (see Lemma~\ref{lem:Wprop}\eqref{item1}) and the
  fact that the matrix $B^T$ has just two nonzero elements $\{-1,1\}$
  in each of its rows, the first two columns of the matrix $B^T
  V_\theta$ are always zero and hence, \eqref{eq:34} does not depend
  on $(z_1,z_2)$.  That is,
  $\mathbf{f}=[f(\theta_i-\theta_j)]_{i,j\in\J}$ does not depend on
  $(z_1,z_2)$ and thus, system \eqref{eq:29} is decoupled from system
  \eqref{eq:28}.
\end{proof}

From Lemma~\ref{lem:A_eval} the linear subsystem \eqref{eq:28} has one
zero and one stable eigenvalue.  According to Lemma~\ref{lem:A_eval}
and Lemma~\ref{lem:decoup_sys}, the companion subsystem \eqref{eq:29}
has a stable diagonal linear part and a nonlinear perturbation term
that depends only on its own state variables. In the following two
sections we study the boundedness properties of these decoupled
subsystems.

\subsection{Boundedness of the Average Frequency Error}
\label{subsec:wsyncbndd}
The representation \eqref{eq:28}--\eqref{eq:29} of the microgrid
system facilitates the analysis of the average frequency error
and its boundedness, as shown next.
\begin{lemma}
  \label{lem:wsync_bndd}
  For the microgrid system represented by
  \eqref{eq:28}--\eqref{eq:29}, the average frequency error
  $\omega_{sync}$ given in \eqref{eq:wsync} is bounded if all the
  inverter power injection errors $P_i^*, i\in\I$, are bounded.
\end{lemma}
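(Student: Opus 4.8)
The plan is to express $\omega_{sync}$ purely in terms of the two modal coordinates $(z_1,z_2)$ of the decoupled subsystem~\eqref{eq:28}, and then invoke the stability of that subsystem's single nonzero eigenvalue.

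First I would write $\omega_{sync}=\frac{1}{n}\1_n^T\dot{x}_\theta$ and substitute $x_\theta=V_\theta z$ from~\eqref{eq:33}, giving $\omega_{sync}=\frac{1}{n}\1_n^T V_\theta\dot{z}$. The crucial step is to evaluate the row vector $\1_n^T V_\theta$. From~\eqref{eq:33}, its first two columns are $u_1=\1_n$ and $(k/e)u_1$, contributing the entries $n$ and $(k/e)n$; every remaining column of $V_\theta$ is a scalar multiple of an eigenvector $u_i$ with $i\ge 2$, and Lemma~\ref{lem:Wprop}\eqref{item3} gives $\1_n^T u_i=0$ for these. Hence $\1_n^T V_\theta=[\,n,\;(k/e)n,\;0,\dots,0\,]$ and therefore $\omega_{sync}=\dot{z}_1+(k/e)\dot{z}_2$, confirming that $\omega_{sync}$ is governed entirely by~\eqref{eq:28}.

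Next I would substitute the explicit dynamics~\eqref{eq:28}, using $\lambda_2=-e/(dk)$. Writing $\bar{P}^*\doteq\frac{1}{n}\sum_{i=1}^n P_i^*$ for the average power injection error, the two driving terms in~\eqref{eq:28} combine after cancellation to the compact identity $\omega_{sync}=\frac{1}{d}\,(\bar{P}^*-z_2)$. This reduces the problem to bounding the single scalar state $z_2$, together with the trivial observation that $\bar{P}^*$ is bounded whenever all the $P_i^*$ are.

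Finally, boundedness of $z_2$ follows from its scalar linear dynamics $\dot{z}_2=\lambda_2 z_2+\frac{1}{dk}\bar{P}^*$. Since $\lambda_2<0$ by Lemma~\ref{lem:A_eval}, this is a stable first-order system driven by the bounded input $\frac{1}{dk}\bar{P}^*$, so standard bounded-input bounded-state reasoning keeps $z_2$ bounded; combined with the bound on $\bar{P}^*$, this yields a bound on $\omega_{sync}$. I expect the only real subtlety to be the first step --- verifying that the averaging operation $\1_n^T V_\theta$ annihilates all the modes except the two associated with $\mu_1=0$. Once the clean identity $\omega_{sync}=\frac{1}{d}(\bar{P}^*-z_2)$ is in hand, the remaining argument is routine linear-systems analysis.
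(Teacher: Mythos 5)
Your proposal is correct and follows essentially the same route as the paper: both arguments project onto $\1_n$ (equivalently, use the identity $z_1+(k/e)z_2=\tfrac{1}{n}\sum_i\theta_i$ obtained from $\1_n^T V_\theta$ and Lemma~\ref{lem:Wprop}\eqref{item3}) to reduce $\omega_{sync}$ to the $(z_1,z_2)$ subsystem~\eqref{eq:28}, and then invoke bounded-input bounded-state stability of the scalar $z_2$ dynamics with $\lambda_2=-e/(dk)<0$. The only cosmetic difference is that the paper bounds $\dot z_2$ and writes $\omega_{sync}=\dot z_1+k\dot z_2/e$, whereas you substitute the ODE to get the equivalent closed form $\omega_{sync}=\tfrac{1}{d}(\bar P^*-z_2)$; you also supply the detail of why $\1_n^T V_\theta$ annihilates the higher modes, which the paper asserts without proof.
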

\begin{proof}
  From $x=V z$ it can be shown that 
  \begin{align}
  	\label{eq:32}
  	z_1 = (\sum_{i=1}^n \theta_i)/n - k z_2/e. \quad
  \end{align}
  Since from \eqref{eq:28}, $\dot{z}_2=\lambda_2 z_2+(\sum_{i=1}^n P_i^*)/(ndk)$
  with $\lambda_2=-e/(dk)$, then $z_2$ and $\dot{z}_2$ remain bounded for 
  bounded $P_i^*,i\in~\I$. Furthermore, from \eqref{eq:28} and \eqref{eq:32},
  we have 
  \begin{equation}
  	\label{eq:321}
    \dot{z}_1=\underbrace{(\sum_{i=1}^n \dot{\theta}_i)/n}_{\omega_{sync}} 
    - k \dot{z}_2/e  =(e-1)(\sum_{i=1}^n P_i^*)/(nde).
  \end{equation}
  Then the average frequency error \eqref{eq:wsync} also
  remains bounded for bounded inverter errors $P_i^*,i\in\I$.
\end{proof}

\begin{corollary}
  \label{cor:wsync}
  The average frequency error $\omega_{sync}$ converges to
  \begin{equation}
  	\label{eq:wsync_ss}
  	{\omega_{sync}}_{ss}=\frac{(\sum_{i=1}^n P_i^*)d \epsilon}{nd(1+d\epsilon)}
  \end{equation}
  if $\sum_{i=1}^n P_i^*$ is constant.  
\end{corollary}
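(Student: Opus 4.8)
The plan is to read off the answer from the relations already assembled in the proof of Lemma~\ref{lem:wsync_bndd}. The decisive identity is~\eqref{eq:321}, which after rearrangement expresses the average frequency error as
\[
\omega_{sync}=\frac{(e-1)\left(\sum_{i=1}^n P_i^*\right)}{nde}+\frac{k}{e}\,\dot z_2 .
\]
Under the hypothesis that $\sum_{i=1}^n P_i^*$ is constant, the first term is a fixed number, so the whole convergence question collapses to showing that the remaining contribution $\frac{k}{e}\dot z_2$ vanishes as $t\to\infty$.

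For that, I would argue directly from the scalar $z_2$-dynamics in~\eqref{eq:28}, namely $\dot z_2=\lambda_2 z_2+\left(\sum_{i=1}^n P_i^*\right)/(ndk)$ with $\lambda_2=-e/(dk)$, which is strictly negative by Lemma~\ref{lem:A_eval}. This is a stable first-order linear system driven by a constant input, so its solution is $z_2(t)=z_{2,ss}+(z_2(0)-z_{2,ss})e^{\lambda_2 t}$ and hence $\dot z_2(t)=\lambda_2(z_2(0)-z_{2,ss})e^{\lambda_2 t}\to 0$ exponentially. Equivalently, $z_2$ settles to the equilibrium determined by $\dot z_2=0$, so its derivative disappears in the limit.

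Passing to the limit $t\to\infty$ in the displayed expression for $\omega_{sync}$ therefore leaves only the constant term, giving ${\omega_{sync}}_{ss}=(e-1)\left(\sum_{i=1}^n P_i^*\right)/(nde)$. The last step is purely algebraic: substituting $e-1=\epsilon d$, which follows from the definition $e\dfn 1+\epsilon d$, into both numerator and denominator reproduces~\eqref{eq:wsync_ss} exactly. I do not expect any genuine difficulty here; the only substantive ingredient beyond the lemma's algebra is the elementary fact that a stable first-order system with constant forcing has a vanishing derivative at steady state. If anything, the point worth making explicit is that $\omega_{sync}$ depends only on the two-dimensional subsystem~\eqref{eq:28} (through~\eqref{eq:32}) and not on the nonlinear block~\eqref{eq:29}, so that convergence of the average frequency error is not contingent on any boundedness property of the perturbed states $\hat z$.
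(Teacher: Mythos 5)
Your proposal is correct and follows essentially the same route as the paper's own proof: rearrange~\eqref{eq:321} to isolate $\omega_{sync}$, note that the stable scalar $z_2$-dynamics in~\eqref{eq:28} with constant forcing give $\dot z_2 \to 0$, and substitute $e-1=\epsilon d$ to recover~\eqref{eq:wsync_ss}. You merely make explicit (via the closed-form solution of the first-order ODE) what the paper states in one line, so there is nothing to add.
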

\begin{proof}
  From the $z_2$ equation in \eqref{eq:28}, $z_2$ is proved to converge to 
  a constant and hence $\dot{z}_2$ converges to zero. Then, from
  \eqref{eq:321} and $e=1+\epsilon d$, $\omega_{sync}$ converges to 
  \eqref{eq:wsync_ss}.
\end{proof}

\begin{remark}
  From Corollary~\ref{cor:wsync}, a smaller value of $\epsilon$
  yields a smaller average frequency steady state error.
  \mer
\end{remark}

\subsection{Ultimate Boundedness}
\label{subsec:UB}
We will analyse the ultimate boundedness properties of the subsystem
\eqref{eq:29} by applying Theorem~3 of~\cite{HaS13}. When specialised
to non-switched systems, the latter result establishes that for a
stable linear system with a nonlinear perturbation term, the trajectories
starting inside a region of the state space are ultimately bounded if
the nonlinear perturbation satisfies certain conditions.  
More specifically, to meet the requirements of \cite[Theorem~3]{HaS13},
the perturbation term should be bounded by a componentwise non-increasing
(CNI) function and further satisfy a contractivity condition. 
We first derive in the following result a CNI
bound for the perturbation term in~\eqref{eq:29} and then address the
contractivity condition in Lemma~\ref{lem:contractivity}.

\begin{lemma}
  \label{lem:CNIfcn}
  The perturbation term $\hat{\Gamma} (\hat{U}_H
  \mathbf{f}-\hat{U}_P)$ in system~\eqref{eq:29} is bounded by a CNI
  function as follows:
  \begin{equation}
    \label{eq:1}
    |\hat{\Gamma}(\hat{U}_H \mathbf{f}-\hat{U}_P)|\le |\hat{\Gamma}\hat{U}_H|
    F(\hat{z}) +  |\hat{\Gamma} \hat{U}_P|, 
  \end{equation}
where
\begin{equation}
  \label{eq:2}
  F(\hat{z}) \doteq  \frac{(|B^T V_\theta||z|)^3}{6},
\end{equation}
  with $V_\theta$ as in \eqref{eq:33}.
\end{lemma}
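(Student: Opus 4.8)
The plan is to reduce \eqref{eq:1} to the elementary scalar estimate $|\sin x - x|\le |x|^3/6$ applied entrywise to $\mathbf{f}$, and then to propagate that bound through the constant linear maps appearing in \eqref{eq:29} using componentwise absolute-value inequalities. First I would recall that $f(x)=\sin x - x$ and invoke Taylor's theorem with Lagrange remainder about the origin, writing $\sin x - x = -\tfrac{\cos\xi}{6}x^3$ for some $\xi$ between $0$ and $x$; since $|\cos\xi|\le 1$ this gives $|f(x)|\le |x|^3/6$ for every $x\in\R$. Applying this to each entry of $\mathbf{f}=[f(\theta_i-\theta_j)]_{i,j\in\J}$ produces the componentwise bound $|\mathbf{f}|\le \tfrac{1}{6}\,\big|[\theta_i-\theta_j]_{i,j\in\J}\big|^{3}$, where the cube is taken componentwise.

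Next I would convert the phase differences into modal coordinates. Using \eqref{eq:34}, namely $[\theta_i-\theta_j]_{i,j\in\J}=B^T V_\theta z$, together with the triangle inequality for a matrix--vector product (i.e.\ $|Mv|\le |M|\,|v|$ componentwise), I obtain $\big|[\theta_i-\theta_j]_{i,j\in\J}\big|\le |B^T V_\theta|\,|z|$. Since cubing is monotone on the nonnegative reals and every quantity here is nonnegative, this yields $|\mathbf{f}|\le \tfrac{1}{6}\big(|B^T V_\theta|\,|z|\big)^{3}=F(\hat z)$, with $F$ as in \eqref{eq:2}. At this point I would record the structural observation already established in the proof of Lemma~\ref{lem:decoup_sys}: the first two columns of $B^T V_\theta$ vanish, hence so do those of $|B^T V_\theta|$, so $F$ in fact depends only on $\hat z$ and not on $(z_1,z_2)$; this is what makes $F(\hat z)$ a legitimate function of the subsystem state.

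Finally I would assemble the bound for the full perturbation term. The componentwise triangle inequality applied to \eqref{eq:29} gives $|\hat{\Gamma}(\hat{U}_H\mathbf{f}-\hat{U}_P)|\le |\hat{\Gamma}\hat{U}_H|\,|\mathbf{f}| + |\hat{\Gamma}\hat{U}_P|$. Substituting $|\mathbf{f}|\le F(\hat z)$ and using that $|\hat{\Gamma}\hat{U}_H|$ is a nonnegative matrix (so left multiplication preserves componentwise inequalities) yields exactly \eqref{eq:1}. To close the CNI claim I would observe that the right-hand side of \eqref{eq:1} is order-preserving in $|\hat z|$: the map $|\hat z|\mapsto |B^T V_\theta|\,|z|$ is nonnegative and monotone, componentwise cubing preserves the order, and a further left multiplication by the nonnegative matrix $|\hat{\Gamma}\hat{U}_H|$ preserves it once more, so the bounding function satisfies the CNI definition.

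I expect the delicate points to be bookkeeping rather than substance: carefully justifying that absolute values pass through the constant matrices as componentwise inequalities ($|Mv|\le|M|\,|v|$), that the cube and the subsequent multiplications are all monotone on the nonnegative orthant, and that the vanishing of the first two columns of $B^T V_\theta$ genuinely makes $F$ a function of $\hat z$ alone. The analytic core --- the cubic Taylor bound on $\sin x - x$ --- is standard.
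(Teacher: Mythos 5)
Your proposal is correct and follows essentially the same route as the paper's proof: bound each component of $\mathbf{f}$ via $|\sin x - x|\le |x|^3/6$, pass to modal coordinates through \eqref{eq:34} using $|Mv|\le |M|\,|v|$, and assemble \eqref{eq:1} by the componentwise triangle inequality, with the CNI property following from nonnegativity and monotonicity of the products involved. Your write-up is merely more explicit than the paper's (e.g.\ in justifying the Taylor remainder and the dependence on $\hat z$ alone), which is a virtue rather than a deviation.
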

 \begin{proof}
  We first  bound the nonlinear function $\mathbf{f}$, with
  components $f(\theta_i-\theta_j)$ with $f(x)=\sin(x)-x$.
  Recalling from the proof of Lemma~\ref{lem:decoup_sys}
  that~\eqref{eq:34} only depends on $\hat{z}$, and using the
  inequality $|\sin(x)-x|\le|x|^3/6$ we can bound
  \begin{equation}
	\label{eq:36}
	f(\theta_i-\theta_j) 
	\le \frac{(|{[B^T]}_{(i,:)} V_\theta||z|)^3}{6} 
	 \doteq F_i(\hat{z}),
  \end{equation}
  yielding
  \begin{equation}
	\label{eq:37}
	\mathbf{f}=[f(\theta_i-\theta_j)]_{i,j\in\J}
	\le \frac{(|B^T V_\theta||z|)^3}{6} \doteq F(\hat{z}).
  \end{equation}
  The bound \eqref{eq:1}--\eqref{eq:2} then follows. The CNI property
  of the bound is immediate from the nonnegativity of all entries in
  the products involved.
\end{proof}

Following \cite{HaS13}, we next define a nonlinear mapping
$T:\R_{+0}^{2n-2}\rightarrow \R_{+0}^{2n-2}$ constructed from the
bound~\eqref{eq:1} as follows:
 \begin{align}
   \notag T(\hat{z}) &\doteq |\hat{\Lambda}|^{-1} (|\hat{\Gamma}
   \hat{U}_H| F(\hat{z}) + |\hat{\Gamma} \hat{U}_P|)
   \\
   \label{eq:38}
   &= |\hat{U}_H| F(\hat{z}) + |\hat{U}_P|,
  \end{align}
where the second line follows from \eqref{eq:Lhat} and \eqref{eq:Ghat}.
From \cite[Theorem~3]{HaS13} (see \cite{haimovich12:_bound-arxiv} for
proofs), if a vector $\bar{z}$ with positive components exists such
that contractivity condition
\begin{equation}
	\label{eq:35}
	T(\bar{z}) < \bar{z}
\end{equation}
holds componentwise, then the trajectories of the nonlinear system
\eqref{eq:29} are ultimately bounded and the ultimate bound can be
found by recursively iterating the mapping $T(\cdot)$ starting from
$\bar{z}$. In the following lemma we give a sufficient condition
for~\eqref{eq:35} to hold for some~$\bar{z}$.
 
\begin{lemma}
  \label{lem:contractivity}
  Suppose there exist positive constants $g_1$, $g_2,\dots,g_{n-1}$ such
  that the scalar inequality
  \begin{equation}
  	\label{eq:ContCond}
  	{u_p}_{(i+1)}^2 < \frac{4 g_i^3}{27 \gamma_i}
  \end{equation}
  holds for $i=1,\dots,n-1$, where $u_p$ is defined in \eqref{eq:27},
  $\gamma_i \doteq |{u_h}_{(i+1,:)}|(|B^T V_\theta|G)^3/6 >0$, with
  $u_h$ defined in \eqref{eq:27} and $G \doteq
  [1,1,g_1,g_1,g_2,g_2,\dots,g_{n-1}]^T$. Then there exists a scalar
  $\zeta>0$ such that the nonnegative vector $\bar{z}\doteq G_{(3:2n)}
  \zeta $ satisfies the contractivity condition~\eqref{eq:35}.
\end{lemma}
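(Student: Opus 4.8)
The plan is to turn the vector contractivity requirement \eqref{eq:35} into $n-1$ decoupled scalar cubic inequalities (one per $2\times 2$ block of $\hat z$) and then to produce a single scale $\zeta$ that solves all of them at once. The reduction and the per-block cubic optimisation are routine; the genuine difficulty is the final amalgamation into one common $\zeta$.

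First I would evaluate $T$ from \eqref{eq:38} at $\bar z = G_{(3:2n)}\zeta$. Since the first two columns of $|B^T V_\theta|$ vanish (established in the proof of Lemma~\ref{lem:decoup_sys}), the two leading unit entries of $G$ do not contribute, so $|B^T V_\theta|\,|\bar z| = (|B^T V_\theta| G)\,\zeta$ and hence, from \eqref{eq:2}, $F(\bar z) = (|B^T V_\theta| G)^3\zeta^3/6$ componentwise. Next I would use the Kronecker structure of \eqref{eq:25}: each row of $u_h$ (resp.\ entry of $u_p$) appears twice in $U_H$ (resp.\ $U_P$), so after dropping the first block in \eqref{eq:Uhat}, rows $2i-1$ and $2i$ of $\hat U_H$ both equal ${u_h}_{(i+1,:)}$ and entries $2i-1,2i$ of $\hat U_P$ both equal ${u_p}_{(i+1)}$. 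Consequently both components of block $i$ of $T(\bar z)$ collapse to $\gamma_i\zeta^3 + |{u_p}_{(i+1)}|$, with $\gamma_i$ exactly as in the statement. Since $\bar z_{2i-1}=\bar z_{2i}=g_i\zeta$, the condition \eqref{eq:35} is equivalent to the $n-1$ scalar inequalities $\gamma_i\zeta^3 + |{u_p}_{(i+1)}| < g_i\zeta$, for $i=1,\dots,n-1$.

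For fixed $i$ I would then study $p_i(\zeta)\doteq g_i\zeta-\gamma_i\zeta^3-|{u_p}_{(i+1)}|$ on $\zeta>0$. As a cubic with negative leading coefficient it increases and then decreases, with a unique interior maximiser $\zeta_i^\ast=\sqrt{g_i/(3\gamma_i)}$, at which $p_i(\zeta_i^\ast)=\frac{2 g_i^{3/2}}{3\sqrt{3}\,\sqrt{\gamma_i}}-|{u_p}_{(i+1)}|$. This value is strictly positive precisely when ${u_p}_{(i+1)}^2<4g_i^3/(27\gamma_i)$, which is exactly hypothesis \eqref{eq:ContCond}. Thus \eqref{eq:ContCond} guarantees that, for each $i$, the solution set $\{\zeta>0:p_i(\zeta)>0\}$ is a nonempty open interval $(\alpha_i,\beta_i)$ containing $\zeta_i^\ast$.

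The remaining and hardest step is to exhibit a single $\zeta\in\bigcap_{i}(\alpha_i,\beta_i)$: the scalar inequalities were obtained independently, and nothing so far forces these intervals to overlap (indeed, treating $\gamma_i,g_i,|{u_p}_{(i+1)}|$ as unrelated, one can make two such intervals disjoint). The overlap must therefore come from the \emph{shared} structure of the $\gamma_i$, namely that each is the inner product of a fixed nonnegative row $|{u_h}_{(i+1,:)}|$ with the \emph{common} vector $(|B^T V_\theta| G)^3/6$; this couples the $\gamma_i$ to one another and to the $g_j$ through $G$. My plan is to reduce the common-intersection claim to pairwise overlap via the one-dimensional Helly property (finitely many real intervals meet in a point iff they meet pairwise), to normalise using the homogeneity of \eqref{eq:ContCond} under $g_i\mapsto s g_i$ (which sends $\gamma_i\mapsto s^3\gamma_i$ and $\zeta_i^\ast\mapsto\zeta_i^\ast/s$, leaving \eqref{eq:ContCond} invariant), and then to bound the roots $\alpha_j,\beta_i$ in terms of the $\zeta_i^\ast$ and the common factor $(|B^T V_\theta| G)^3$ so as to show $\beta_i>\alpha_j$ for all $i,j$. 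Once such a common $\zeta$ is secured, $\bar z=G_{(3:2n)}\zeta$ satisfies \eqref{eq:35} and the proof concludes. I expect this interval-intersection argument, exploiting the shared $\gamma_i$ structure, to be the real obstacle, the reduction and single-block cubic analysis being essentially mechanical.
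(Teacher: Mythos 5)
Your reduction of \eqref{eq:35} to the $n-1$ decoupled scalar cubic inequalities $\gamma_i\zeta^3 - g_i\zeta + |{u_p}_{(i+1)}| < 0$ is exactly the paper's route: the paper likewise uses the repeated-row (Kronecker) structure of $U_H$, $U_P$ to collapse each $2\times 2$ block of $T(\bar z)$ to a single scalar condition, arriving at \eqref{eq:43}. Your per-block analysis is equivalent to the paper's, only phrased differently: you locate the maximiser $\zeta_i^\ast=\sqrt{g_i/(3\gamma_i)}$ of $g_i\zeta-\gamma_i\zeta^3$ and check that the maximum exceeds $|{u_p}_{(i+1)}|$ precisely under \eqref{eq:ContCond}, whereas the paper imposes positivity of the discriminant of the cubic in \eqref{eq:43}; these are the same condition. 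Up to that point your argument is complete and correct.

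The difference lies in the final step, and it cuts both ways. You correctly note that solvability of each scalar inequality on its own interval $(\alpha_i,\beta_i)$ does not by itself produce a single $\zeta\in\bigcap_i(\alpha_i,\beta_i)$, and you only sketch a plan (Helly's property for intervals, normalisation, root bounds exploiting the shared factor $(|B^T V_\theta|G)^3$) without carrying it out --- so your proof is incomplete as written. You should be aware, however, that the paper's own proof never addresses this step at all: it verifies the discriminant condition for each $i$ separately and stops, implicitly treating per-$i$ solvability as sufficient; the worked example later computes $\zeta(Q_1)\cap\zeta(Q_2)$ numerically and finds it nonempty for that particular instance. Your suspicion that the intervals need not overlap ``for free'' is well founded: since the cubic equals $|{u_p}_{(i+1)}|\ge 0$ at $\zeta=\sqrt{g_i/\gamma_i}$, one has $\beta_i\le\sqrt{g_i/\gamma_i}=\sqrt{3}\,\zeta_i^\ast$, while $\alpha_j$ approaches $\zeta_j^\ast$ as $|{u_p}_{(j+1)}|$ approaches its critical value $2g_j^{3/2}/(3\sqrt{3\gamma_j})$; so if the maximisers $\zeta_i^\ast$ are sufficiently spread out, the intervals can be disjoint even though \eqref{eq:ContCond} holds for every $i$. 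The gap you flag is therefore genuine, but it is a gap in the lemma's published proof as much as in your proposal; closing it requires either the interval-intersection argument you outline (which would need the coupling of the $\gamma_i$ through the common vector $(|B^T V_\theta|G)^3$ to be made quantitative) or a strengthening of the hypothesis to demand a common $\zeta$ satisfying \eqref{eq:43} for all $i$, which is in effect what the subsequent theorem and the example assume.
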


\begin{proof}
From \eqref{eq:25}, it can be seen that each even
row of $U_H$ and $U_P$ is equal to its preceding row and thus, $\hat
U_H$, $\hat U_P$ defined in~\eqref{eq:Uhat} and the vector function
$T(\hat{z})$ in~\eqref{eq:38} also share the same property. That is,
letting $T(\hat{z})=
\left[\begin{smallmatrix}
  T_1(\hat{z}) & T_2 (\hat{z}) & \dots & T_{2n-2}(\hat{z}) 
\end{smallmatrix}\right] $, 
we have for $i=1,\dots,n-1$
  \begin{align}
  	\label{eq:39}
    \left[\begin{smallmatrix}
      	T_{2i-1}(\hat{z}) \\ T_{2i}(\hat{z}) 
    \end{smallmatrix}\right]
    &= t_i(\hat{z}) \otimes 
    \left[\begin{smallmatrix} 1 \\ 1 \end{smallmatrix}\right], 
    \\
    \notag
    t_i(\hat{z}) &= |{u_h}_{(i+1,:)}| F(\hat{z}) + |{u_p}_{(i+1)}|
    \\
    \label{eq:391}
    &= |{u_h}_{(i+1,:)}| \frac{(|B^T V_\theta||z|)^3}{6} +
    |{u_p}_{(i+1)}|.
  \end{align}
  The contractivity condition \eqref{eq:35} with the consideration of 
  \eqref{eq:39} takes the form
  \begin{align*}
  	\left[\begin{smallmatrix}
		T_{2i-1}(\bar{z}) \\ T_{2i}(\bar{z}) 
	\end{smallmatrix}\right]
	&= t_i(\bar{z}) \otimes 
	\left[\begin{smallmatrix} 1 \\ 1 \end{smallmatrix}\right]
	<
	\left[\begin{smallmatrix} \
		\bar{z}_{2i-1} \\ \bar{z}_{2i} 
	\end{smallmatrix}\right],
  \end{align*}
  which, by choosing $\bar{z}$ to have pairwise repeated rows, can be
  further simplified to
  \begin{align*}
  	\left[\begin{smallmatrix}
		T_{2i-1}(\bar{z}) \\ T_{2i}(\bar{z}) 
	\end{smallmatrix}\right]
	&= t_i(\bar{z}) \otimes 
	\left[\begin{smallmatrix} 1 \\ 1 \end{smallmatrix}\right]
	<
	\bar{z}_{2i} \otimes 
	\left[\begin{smallmatrix} 1 \\ 1 \end{smallmatrix}\right],
  \end{align*}
  and hence,
  \begin{equation}
	\label{eq:40}
	t_i(\bar{z}) 
	= |{u_h}_{(i+1,:)}| \frac{(|B^T V_\theta||z|)^3}{6} + |{u_p}_{(i+1)}| 
	< \bar{z}_{2i}
  \end{equation}
  for $i=1,\dots,n-1$.
  Further substituting $z= G\zeta$ and $\bar{z} =  G_{(3:2n)} \zeta
  $ with $G = [1,1,g_1,g_1,g_2,g_2,\dots,g_{n-1}]^T$, yields
  \begin{align}
	\label{eq:41}
        \bar{t}_i(\zeta) &=|{u_h}_{(i+1,:)}| \frac{(|B^T V_\theta|G)^3}{6}
        \zeta^3 + |{u_p}_{(i+1)}| < g_i \zeta
  \end{align}
  for $i=1,\dots,n-1$, where $\bar{t}_i(\zeta)
  =t_i(\bar{z})$. Equivalently,
  \begin{equation}
	\label{eq:43}
        \gamma_i \zeta^3 - g_i \zeta + |{u_p}_{(i+1)}| < 0
  \end{equation}
  where $\gamma_i=|{u_h}_{(i+1,:)}|(|B^T V_\theta|G)^3/6 >0$. 

  For a generic cubic function $Q(\zeta) \doteq
  a\zeta^3+b\zeta^2+c\zeta+d$, it is known that to have $Q(\zeta)<0$
  for $\zeta>0$, $Q(\zeta)$ must have three distinct real roots, which
  is guaranteed if its discriminant
  $\Delta=18abcd-4db^3+b^2c^2-4ac^3-27a^2d^2$ is positive.  For the
  cubic function on the left hand side of~\eqref{eq:43} the positive
  discriminant condition takes the form
  \begin{equation*}
    \Delta_i = \gamma_i (4  g_i^3-27 \gamma_i {u_p}_{(i+1)}^2) > 0,
  \end{equation*}
  which coincides with~\eqref{eq:ContCond}. 
\end{proof}

We observe that the contractivity condition~\eqref{eq:ContCond} can be
loosely interpreted as a tolerance on `how dissimilar' 
the inverter power errors, $P_i^*$,
are allowed to be to meet the desired
requirements. Indeed, from Lemma~\ref{lem:Wprop}\eqref{item4} and the
definition of $u_p$ in \eqref{eq:27}, for $i=1:n-1$, 
\begin{equation}
  \label{eq:3}
  u_p(i+1) =\ell_{i}(P_1^*, \dots, P_n^*)
\end{equation}
is a linear combination of the inverter power errors
such that, if
$P_i^*=P_j^*$ for all $i,j\in\I$ we have $u_p(i+1)=0$ for $i=1:n-1$
and condition~\eqref{eq:ContCond} is automatically satisfied.


We now have all the elements to establish the stability properties of
the droop controlled microgrid system.

\begin{theorem}
  Under the conditions of Lemma~\ref{lem:contractivity}, let $\zeta>0$
  satisfy~\eqref{eq:43}. Then, for the microgrid system represented by
  \eqref{eq:28}--\eqref{eq:29}, the average frequency error
  $\omega_{sync}$ given in \eqref{eq:wsync} is bounded and the
  trajectories of subsystem~\eqref{eq:29} with initial conditions
  satisfying $|\hat{z}(0)| \le G\zeta$ are ultimately bounded as $\lim
  \, \sup_{t\to \infty} |\hat{z}(t)| \le \lim_{k \to \infty}
  T^k(G\zeta)$.
\end{theorem}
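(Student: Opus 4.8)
The plan is to assemble this theorem from the pieces already in place, treating its two assertions separately. By Lemma~\ref{lem:decoup_sys} the $\hat z$-dynamics~\eqref{eq:29} are decoupled from the $(z_1,z_2)$-dynamics~\eqref{eq:28}, so the boundedness of $\omega_{sync}$ and the ultimate boundedness of~\eqref{eq:29} can be argued independently and then stated together. Almost all of the substantive work has been done in the preceding lemmas; the final step is to check that their conclusions match the hypotheses of the cited ultimate-bound result.

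For the frequency claim I would observe that the hypotheses of Lemma~\ref{lem:contractivity} already presuppose finite inverter power errors: the quantities ${u_p}_{(i+1)}$ appearing in~\eqref{eq:ContCond} are, by~\eqref{eq:3}, fixed linear combinations $\ell_i(P_1^*,\dots,P_n^*)$ of the $P_i^*$, so imposing~\eqref{eq:ContCond} only makes sense when the $P_i^*$ are bounded. With the $P_i^*$ bounded, Lemma~\ref{lem:wsync_bndd} applies verbatim and delivers boundedness of $\omega_{sync}$, disposing of the first assertion with no new computation.

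For the ultimate-boundedness claim the strategy is to verify, for the diagonal stable system~\eqref{eq:29} (its linear part $\hat\Lambda$ having strictly negative real eigenvalues by Lemma~\ref{lem:A_eval}), the two hypotheses of \cite[Theorem~3]{HaS13} in its non-switched specialisation. First, Lemma~\ref{lem:CNIfcn} supplies the CNI bound~\eqref{eq:1}--\eqref{eq:2} on the perturbation $\hat\Gamma(\hat U_H\mathbf{f}-\hat U_P)$, with associated mapping precisely $T(\cdot)$ of~\eqref{eq:38}; here I would note that, although $F$ in~\eqref{eq:2} is written with $|z|$, the zero first-two-columns structure of $B^T V_\theta$ established in the proof of Lemma~\ref{lem:decoup_sys} makes it depend only on $\hat z$, so $T$ is a genuine self-map on $\R_{+0}^{2n-2}$ as required. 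Second, Lemma~\ref{lem:contractivity} produces, under~\eqref{eq:ContCond}, a scalar $\zeta>0$ for which $\bar z\doteq G_{(3:2n)}\zeta$ satisfies the contractivity condition $T(\bar z)<\bar z$ of~\eqref{eq:35}. With both hypotheses met, \cite[Theorem~3]{HaS13} applies directly: trajectories of~\eqref{eq:29} initialised in $|\hat z(0)|\le\bar z$ are ultimately bounded, with ultimate bound $\limsup_{t\to\infty}|\hat z(t)|\le\lim_{k\to\infty}T^k(\bar z)$.

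The only point deserving genuine care — rather than a bare citation — is the convergence of the iteration $T^k(\bar z)$ and its role as the ultimate bound. Since $T$ is CNI (order-preserving) and $T(\bar z)\le\bar z$, an easy induction gives $T^{k+1}(\bar z)\le T^k(\bar z)$, so the iterates form a componentwise non-increasing sequence bounded below by $\0$; they therefore converge to a fixed point of $T$, which is exactly the ultimate bound guaranteed by \cite[Theorem~3]{HaS13}. The remaining bookkeeping is only to reconcile the statement's region $|\hat z(0)|\le G\zeta$ and limit $\lim_{k\to\infty}T^k(G\zeta)$ with $\bar z=G_{(3:2n)}\zeta$, which is immediate once one identifies the relevant $2n-2$ components of $G\zeta$ with the dimension of $\hat z$. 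I do not anticipate a serious obstacle, since the preceding lemmas were tailored precisely to the structure required by the cited theorem.
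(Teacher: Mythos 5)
Your proposal is correct and follows essentially the same route as the paper, whose entire proof reads ``Immediate from the results in this section and Theorem~3 of~\cite{HaS13}''; you have simply made explicit the assembly of Lemmas~\ref{lem:wsync_bndd}, \ref{lem:decoup_sys}, \ref{lem:CNIfcn} and \ref{lem:contractivity} that the authors leave implicit. Your added observations (that $F$ depends only on $\hat z$, that the iterates $T^k(\bar z)$ decrease monotonically to a fixed point, and that $G\zeta$ in the statement should be read as $G_{(3:2n)}\zeta$) are all consistent with the paper's development.
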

\begin{proof}
  Immediate from the results in this section and Theorem~3 of~\cite{HaS13}.
\end{proof}

\section{Example}
To illustrate the discussed concepts, we consider an academic example of
a microgrid system consisting of three inverter buses and two edges with $a_{12}=2$,
$a_{13}=5$, $a_{23}=0$. For this system, the graph data, the
incidence matrix, the Laplacian matrix and its eigenstructure, after
removing all zero rows and columns corresponding to $a_{23}=0$, are
given by
\begin{align*}
	B &= \left[\begin{smallmatrix}
		 1 & 1 \\ -1 & 0 \\ 0 & -1
	\end{smallmatrix}\right],  \quad
	Y = \left[\begin{smallmatrix}
		 2 & 0 \\ 0 & 5
	\end{smallmatrix}\right],\quad
	L = \left[\begin{smallmatrix}
		 7 & -2 & -5 \\
    	-2 &  2 &  0 \\
    	-5 &  0 &  5
	\end{smallmatrix}\right], \\[2mm]
	M &= \diag\{\begin{smallmatrix}
		0, & 2.6411, & 11.3589 
		\end{smallmatrix}\}, \quad
	U =\left[\begin{smallmatrix}
		1 &   0.4718  & -1.2718 \\
    	1 &  -1.4718  &  0.2718 \\
    	1 &   1       &  1
	\end{smallmatrix}\right].
\end{align*}
Using the above data, the system matrices and its eigenvalue-eigenvectors
from \eqref{eq:07}, \eqref{eq:11}--\eqref{eq:13}, with $d=1$, $k=1$ 
and $\epsilon=1$ giving $e=1+\epsilon d=2$, are
\begin{align*}
	A  &= \left[\begin{smallmatrix}
	-7  &   2  &   5  &  -1  &   0  &   0\\
     2  &  -2  &   0  &   0  &  -1  &   0\\
     5  &   0  &  -5  &   0  &   0  &  -1\\
    -7  &   2  &   5  &  -2  &   0  &   0\\
     2  &  -2  &   0  &   0  &  -2  &   0\\
     5  &   0  &  -5  &   0  &   0  &  -2
	\end{smallmatrix}\right], \quad
	H  = \left[\begin{smallmatrix}
	-2  &  -5 \\  2  &   0 \\  0  &   5 \\
    -2  &  -5 \\  2  &   0 \\  0  &   5
	\end{smallmatrix}\right], \\[2mm]
	\Lambda &=\diag \{\begin{smallmatrix}
		0,& -2,& -0.6641,& -3.9770,& -0.9126,& -12.4463 
		\end{smallmatrix}\}, \\[2mm]
	V &= \left[\begin{smallmatrix}
		1 &  0.5 &   0.2386 &  -0.3532 & -0.1217  &  1.1696 \\
    	1 &  0.5 &  -0.7444 &   1.1017 &  0.0260  & -0.2499 \\
    	1 &  0.5 &   0.5058 &  -0.7486 &  0.0957  & -0.9197 \\
        0 &  1   &  -0.4718 &  -0.4718 &  1.2718  &  1.2718 \\
        0 &  1   &   1.4718 &   1.4718 & -0.2718  & -0.2718 \\
        0 &  1   &  -1      &  -1      & -1       & -1
	\end{smallmatrix}\right]
\end{align*}

Next, to form the transformed system \eqref{eq:22} with matrices
\eqref{eq:23}, the required matrices \eqref{eq:24}--\eqref{eq:28} are
\begin{align*}
	\Gamma &= \diag
		\{\begin{smallmatrix}
			1 , 2 , -0.6641 , 3.9770 , -0.9126 , 12.4463 
		\end{smallmatrix}\}, \\
	R &=\diag \{\begin{smallmatrix}
		2 , 3.3129 , 11.5336
		\end{smallmatrix}\}, \\
	u_h &=R^{-1} U^{-1} B Y \\
	&=R^{-1}
	\left[\begin{smallmatrix}
		1 &   0.4718  & -1.2718 \\
    	1 &  -1.4718  &  0.2718 \\
    	1 &   1       &  1
	\end{smallmatrix}\right]^{-1}
	\left[\begin{smallmatrix} 
		1 & 1 \\ -1 & 0 \\ 0 & -1
	\end{smallmatrix}\right]
	Y
	\\
	&=R^{-1}
	\left[\begin{smallmatrix}
		0 & 0 \\ 0.5735 & -0.1559 \\ -0.5735 & -0.8441
	\end{smallmatrix}\right]
	Y
	=\left[\begin{smallmatrix}
		 0      &  0      \\
    	 0.3462 & -0.2353 \\
   		-0.0995 & -0.3659
	\end{smallmatrix}\right],
	\\	u_p
	&=R^{-1} U^{-1} P^* 
	\\
	&=R^{-1}
	\left[\begin{smallmatrix}
		1 &   0.4718  & -1.2718 \\
    	1 &  -1.4718  &  0.2718 \\
    	1 &   1       &  1
	\end{smallmatrix}\right]^{-1}
	\left[\begin{smallmatrix}
		P_1^* \\ P_2^* \\ P_3^*
	\end{smallmatrix}\right] 
	\\
	&=R^{-1}
	\left[\begin{smallmatrix}
		 0.3333 &  0.3333 & 0.3333 \\
    	 0.1392 & -0.4343 & 0.2951 \\
   		-0.4726 &  0.1010 & 0.3716
	\end{smallmatrix}\right]
	\left[\begin{smallmatrix}
		P_1^* \\ P_2^* \\ P_3^*
	\end{smallmatrix}\right] 
	\\
	&=\left[\begin{smallmatrix}
		 0.1667 &   0.1667  &  0.1667 \\
    	 0.0420 &  -0.1311  &  0.0891 \\
   		-0.0410 &   0.0088  &  0.0322
	\end{smallmatrix}\right]
	\left[\begin{smallmatrix}
		P_1^* \\ P_2^* \\ P_3^*
	\end{smallmatrix}\right]
	\\
	&
	=\left[\begin{smallmatrix}
		(P_1^*+P_2^*+P_3^*)/6 \\ 
		 0.0420 P_1^* - 0.1311 P_2^* + 0.0891 P_3^* \\
		-0.0410 P_1^* + 0.0088 P_2^* + 0.0322 P_3^*
	\end{smallmatrix}\right],
	\\
	U_H&= u_h \otimes \left[\begin{smallmatrix} 1 \\ 1 \end{smallmatrix}\right],
	\\
	U_P&= u_p \otimes \left[\begin{smallmatrix} 1 \\ 1 \end{smallmatrix}\right],
	\\
	V^{-1} H  &= \Gamma U_H
	=\left[\begin{smallmatrix}
		0 & 0 \\ 0 & 0 \\
  	   -0.2299  &  0.1562 \\  1.3770  & -0.9356 \\
    	0.0908  &  0.3340 \\ -1.2378  & -4.5546
	\end{smallmatrix}\right],
	\\
	V^{-1} \bar{P}  &= - \Gamma U_P
	=\left[\begin{smallmatrix}
		(P_1^*+P_2^*+P_3^*)/6 \\ (P_1^*+P_2^*+P_3^*)/3 \\
		-0.0279 P_1^* + 0.0871 P_2^* - 0.0592 P_3^* \\
		 0.1670 P_1^* - 0.5214 P_2^* + 0.3544 P_3^* \\
		 0.0374 P_1^* - 0.0080 P_2^* - 0.0294 P_3^* \\
		-0.5103 P_1^* + 0.1095 P_2^* + 0.4008 P_3^*
	\end{smallmatrix}\right].
\end{align*}
It can be seen that ${u_h}_{(1,:)}=\0_2^T$ yields the first two rows of
$V^{-1}H$ equal to zero, which confirms that the two subsystems
\begin{align*}
	\left[\begin{smallmatrix} \dot{z}_1 \\ \dot{z}_2 \end{smallmatrix}\right]
	&=\left[\begin{smallmatrix} 0 & 0 \\ 0 & -2  \end{smallmatrix}\right]
	 \left[\begin{smallmatrix} z_1 \\ z_2 \end{smallmatrix}\right]
	+\left[\begin{smallmatrix} 1/2 \\ 1 \end{smallmatrix}\right]
	 \left[\begin{smallmatrix} (\sum_{i=1}^n P_i^*)/3 \end{smallmatrix}\right]
	\\
	\left[\begin{smallmatrix} 
		\dot{z}_1 \\ \dot{z}_2 \\ \dot{z}_3 \\ \dot{z}_3
	\end{smallmatrix}\right]
	&=\left[\begin{smallmatrix}
		 -0.6641 & 0 & 0 & 0 \\ 0 & -3.9770 & 0 & 0 \\
		 0 & 0 & -0.9126 & 0 \\ 0 & 0 & 0 & -12.4463
	\end{smallmatrix}\right]
	\left[\begin{smallmatrix} z_1 \\ z_2 \\ z_3 \\ z_4 \end{smallmatrix}\right] \\
	&+\left[\begin{smallmatrix} 1/2 \\ 1 \end{smallmatrix}\right]
	 \left[\begin{smallmatrix} (\sum_{i=1}^n P_i^*)/3 \end{smallmatrix}\right]
\end{align*}
 are decoupled.

Partitioning $V$ as in \eqref{eq:33} with $V_\theta=V_{(1:3,:)}$, 
the line phases are computed from \eqref{eq:34} to be
\begin{align*}
	\left[\begin{smallmatrix}
		\theta_1-\theta_2 \\ \theta_1-\theta_3
	\end{smallmatrix}\right]
	&=\tiny{ B^T V_\theta z }
	\\
	&=\left[\begin{smallmatrix} 
		0 & 0 & 0.9831 &-1.4549 &-0.1478 & 1.4195 \\
        0 & 0 &-0.2672 & 0.3954 &-0.2175 & 2.0893
    \end{smallmatrix}\right]
	\left[\begin{smallmatrix} 
		z_1 \\ z_2 \\ z_3 \\ z_4 \\ z_5 \\ z_6
	\end{smallmatrix}\right]
	\\
	&
	=\left[\begin{smallmatrix} 
		0.9831 &-1.4549 &-0.1478 & 1.4195 \\
        -0.2672 & 0.3954 &-0.2175 & 2.0893
    \end{smallmatrix}\right]
	\left[\begin{smallmatrix} 
		z_3 \\ z_4 \\ z_5 \\ z_6
	\end{smallmatrix}\right].
\end{align*}
The function $F(\hat{z})$ in Lemma~\ref{lem:CNIfcn} is obtained from
\eqref{eq:2} as
\begin{align*}
	  F(\hat{z})&=\frac{(|B^T V_\theta||z|)^3}{6} 
	\\
	&=\left(
	\left|\begin{smallmatrix} 
		0.9831 &-1.4549 &-0.1478 & 1.4195 \\
        -0.2672 & 0.3954 &-0.2175 & 2.0893
    \end{smallmatrix}\right|
    \left|\begin{smallmatrix} 
		z_3 \\ z_4 \\ z_5 \\ z_6
	\end{smallmatrix}\right|
	\right)^3/6 \\
    &=\left[\begin{smallmatrix} 
		(0.9831 z_3 + 1.4549 z_4 + 0.1478 z_5 + 1.4195 z_6)^3/6 \\
        (0.2672 z_3 + 0.3954 z_4 + 0.2175 z_5 + 2.0893 z_6)^3/6
    \end{smallmatrix}\right].
\end{align*}
Then, from \eqref{eq:38}, \eqref{eq:39}--\eqref{eq:391} the nonlinear
mapping $T:\R_{+0}^{4}\rightarrow \R_{+0}^{4}$ is
\begin{align*}
	T(\hat{z}) = |\hat{U}_H| F(\hat{z}) + |\hat{U}_P|
	=\left[\begin{smallmatrix}  t_1(\hat{z}) \\ t_2(\hat{z}) \end{smallmatrix}\right]
	\otimes \left[\begin{smallmatrix}  1 \\ 1 \end{smallmatrix}\right]
\end{align*}
with
\begin{multline*}
\hspace{-3mm}	t_1(\hat{z})= |{u_h}_{(2,:)}| F(\hat{z}) + |{u_p}_{(2)}| \\
	\hspace{6mm} =\left|\begin{smallmatrix} 
		0.3462 & -0.2353 
	\end{smallmatrix}\right|
	F(\hat{z})
    + \left|\begin{smallmatrix}
		 0.0420 P_1^* - 0.1311 P_2^* + 0.0891 P_3^*
	\end{smallmatrix}\right|, 
\end{multline*}	
\begin{multline*}
\hspace{-3mm}	t_2(\hat{z})= |{u_h}_{(3,:)}| F(\hat{z}) + |{u_p}_{(3)}| \\
	\hspace{3mm} 	=\left|\begin{smallmatrix}
   		-0.0995 & -0.3659
	\end{smallmatrix}\right|
	F(\hat{z})
    + \left|\begin{smallmatrix}
		-0.0410 P_1^* + 0.0088 P_2^* + 0.0322 P_3^*
	\end{smallmatrix}\right|.
\end{multline*}
With the selection of $z_3=z_4=g_1 \zeta$ and $z_5=z_6=g_2 \zeta$, the
scalar inequalities \eqref{eq:41} to satisfy the contractivity
condition~\eqref{eq:35} are
\begin{align*}
	t_1(\zeta)
	&=( \begin{smallmatrix} 
			0.3462(0.9831 g_1 + 1.4549 g_1 + 0.1478 g_2 + 1.4195 g_2)^3 
	   \end{smallmatrix} \\
	 &+\begin{smallmatrix}
	 		0.2353(0.2672 g_1 + 0.3954 g_1 + 0.2175 g_2 + 2.0893 g_2)^3 
	   \end{smallmatrix} 
	  ) \zeta^3 \\
	 &+\left|\begin{smallmatrix}
		 0.0420 P_1^* - 0.1311 P_2^* + 0.0891 P_3^*
	\end{smallmatrix}\right| \\
	&= \gamma_1 \zeta^3 + |{u_p}_{(2)}| < g_1 \zeta
	\\
	t_2(\zeta)
	&=( 
	   \begin{smallmatrix} 
			0.0995(0.9831 g_1 + 1.4549 g_1 + 0.1478 g_2 + 1.4195 g_2)^3 
	   \end{smallmatrix} \\
	 &+\begin{smallmatrix}
	 		0.3659(0.2672 g_1 + 0.3954 g_1 + 0.2175 g_2 + 2.0893 g_2)^3 
	   \end{smallmatrix} 
	  ) \zeta^3 \\
	 &+\left|\begin{smallmatrix}
		-0.0410 P_1^* + 0.0088 P_2^* + 0.0322 P_3^*
	\end{smallmatrix}\right| \\
	&= \gamma_2 \zeta^3 + |{u_p}_{(3)}| < g_2 \zeta
\end{align*}
for arbitrary $g_1,g_2>0$.
The inverter power injection setpoints, through the linear
functions~\eqref{eq:3}, then need to satisfy the scalar inequalities
\begin{align}
	\label{ex:01}
  {u_p}_{(2)}^2 & =[\ell_1(P_1^*,P_2^*,P_3^*)]^2< \frac{4 g_1^3}{27 \gamma_1}
  	\doteq b_1,\\
  	\label{ex:02}
  {u_p}_{(3)}^2 &= [\ell_2(P_1^*,P_2^*,P_3^*)]^2 < \frac{4 g_2^3}{27\gamma_2}
    \doteq b_2
\end{align}
for the system to be ultimately bounded.  With regard to these
inequalities, one can run a nonlinear optimisation on $g_1$ and $g_2$
to maximise the upper bounds $b_1$ and $b_2$.  The nonlinear
optimisation
\begin{equation*}
	\max \min_{g_1,g_2} \{b_1,b_2\}
\end{equation*}
yields $g_1=8.0377, g_2=6.4202$ which in turn lead to
$ b_1=0.0421, b_2=0.0421$.

Take, for instance, $P_1^*=1,P_1^*=2,P_1^*=3$. The contractivity conditions
\eqref{ex:01}--\eqref{ex:02} are then satisfied
\begin{align*}
  {u_p}_{(2)}^2 = 0.0471^2 &< 0.0421=b_1,\\
  {u_p}_{(3)}^2 = 0.0732^2 &< 0.0421=b_2.
\end{align*}

The next step is to find $\bar{z}=G \zeta$. For each $t_i(\bar{z})$
function, the $\zeta$ domain for which $t_i(\zeta)<g_i \zeta$ is the
interval between the two positive roots of the polynomial $Q_i(\zeta)
\doteq t_i(\zeta)-g_i \zeta=0$.  For $i=1,2$ we have
$\mathrm{roots}(Q_1)=\{-0.0665,0.0003,0.0662\}$ and
$\mathrm{roots}(Q_2)=\{-0.0835,0.0008,0.0827\}$ which yields
\begin{align*}
	\zeta(Q_1)=(0.0003,0.0662), \quad
	\zeta(Q_2)=(0.0008,0.0827).
\end{align*}
Then, the $\zeta$ domain that satisfies both conditions is the
intersection of these intervals, that is,
\begin{equation}
	\label{ex:03}
	\zeta \in \zeta(Q_1) \bigcap \zeta(Q_2) = (0.0008,0.0662).
\end{equation}
Now we just need to select a starting point $\zeta_0$ from this
interval, compute the associated $\bar{z}_0$ and iteratively calculate
the ultimate bound of the system.  From \cite{HaS13}, the ultimate
bound can be computed by first taking $\bar{z}_0=G_{(3:2n)}\zeta_0$,
$T^1(\bar{z})=T(\bar{z}_0)$ and then iterating,
$T^{k+1}(\bar{z})=T(T^k(\bar{z}))$ for $k\in\Z_+$. Since
$T^{k+1}(\bar{z})\le T^k(\bar{z})$, the ultimate bound is obtained as
$\lim_{k \to \infty} T^k (\bar{z})=\bnd_z  >0$.

Let $\zeta_0=0.0327$.  Then
$\bar{z}_0=[0.2628,0.2628,0.2099,0.2099]^T$.  The resulting ultimate
bound on the $\hat{z}$ states is
\begin{equation}
	\label{ex:04}
	\bnd_z = \left[\begin{smallmatrix}
		0.0481 \\  0.0481 \\  0.0739 \\  0.0739
	\end{smallmatrix}\right].
\end{equation}
We can interpret this ultimate bound 
on the line phases $[\theta_i-\theta_j]_{i,j\in\J}$ 
as follows
\begin{align*}
	\left|\begin{smallmatrix}
		\theta_1-\theta_2 \\ \theta_1-\theta_3
	\end{smallmatrix}\right|
	=|B^T V_\theta z|
	\le 
	|B^T V_\theta| |z|
	\le 
	|B^T V_\theta| \left[\begin{smallmatrix}
		* \\ * \\ \bnd_z
	\end{smallmatrix}\right]=
	\left[\begin{smallmatrix}
		 0.2331 \\ 0.2023
	\end{smallmatrix}\right].
\end{align*}
where the $*$ entries are irrelevant since the first two columns
of $B^T V_\theta$ are zero.
The above bounds on the phase differences is validated as can be seen in
Fig.~\ref{fig:subplot}(a).

The next variable derived from this simulation is the average frequency 
error as in \eqref{eq:wsync1}. Corollary~\ref{cor:wsync} proves that
this frequency converges to the steady state frequency ${\omega_{sync}}_{ss}$
as in \eqref{eq:wsync_ss}. It is also notable that this steady state average
frequency static error is reliant on $\epsilon$, that is, by decreasing 
$\epsilon$ we obtain a smaller ${\omega_{sync}}_{ss}$.
For $\epsilon=1$ and $\epsilon=0.1$ the obtained values are
\begin{align*}
	\epsilon=1   \longrightarrow {\omega_{sync}}_{ss} = 1, \quad
	\epsilon=0.1 \longrightarrow {\omega_{sync}}_{ss} = 0.1818.
\end{align*}
The convergency of $\omega_{sync}$ to 
${\omega_{sync}}_{ss}$ for $\epsilon=1$ is depicted in Fig.~\ref{fig:subplot}(b).

\begin{figure}[h]
\begin{center}
\includegraphics[width=0.5\textwidth]{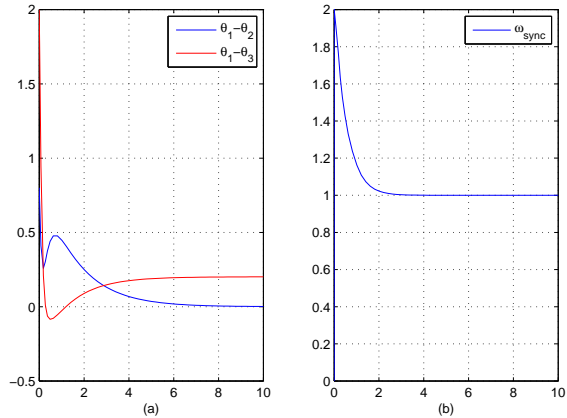}
\caption{(a) Line phases, 
         (b) convergence of $\omega_{sync}$ to  ${\omega_{sync}}_{ss}$}
\label{fig:subplot}
\end{center}
\end{figure}


\section{Conclusions}

We have analysed theoretical properties of inverter-based microgrids
controlled via primary and secondary loops. We have shown that
frequency regulation is ensured without
the need for time separation, and that ultimate boundedness of the
trajectories starting inside a region of the state space is guaranteed
under a condition on the inverters power injection errors. The
trajectory ultimate bound can be computed by simple iterations of a
nonlinear mapping and provides a certificate of the overall
performance of the controlled microgrid.  
Future work includes the derivation of design procedures based on 
the provided analysis, the extension of the results to more general controller
parameters and structures as well as relaxing some of the modelling
assumptions.

\bibliographystyle{plain}
\bibliography{DP_biblio}

\begin{thebibliography}{10}

\bibitem{AiG13a}
N.~Ainsworth and S.~Grijalva.
\newblock {A structure-preserving model and sufficient condition for frequency
  synchronization of lossless droop inverter-based {AC} networks}.
\newblock {\em IEEE Trans.\ on Power Syst.}, 28(4):4310--4319, 2013.

\bibitem{AiG13}
N.~Ainsworth and S.~Grijalva.
\newblock Design and quasi-equilibrium analysis of a distributed
  frequency-restoration controller for inverter-based microgrids.
\newblock In {\em North American Power Symposium}, pages 1--6, 2013.

\bibitem{ASDJ12}
M.~Andreasson, H.~Sandberg, D.V. Dimarogonas, and K.H. Johansson.
\newblock Distributed integral action: Stability analysis and frequency control
  of power systems.
\newblock In {\em {IEEE} Conf.\ on Dec. and Control}, Hawai, 2012.

\bibitem{Bidram2012}
A.~Bidram and A.~Davoudi.
\newblock Hierarchical structure of microgrids control system.
\newblock {\em IEEE Transactions on Smart Grid}, 3(4):1963--1976, 2012.

\bibitem{CDA93}
M.C. Chandorkar, D.M. Divan, and R.~Adapa.
\newblock {Control of parallel connected inverters in standalone AC supply
  systems}.
\newblock {\em IEEE Transactions on Industry Applications}, 29(1):136--143,
  1993.

\bibitem{FD-JWSP-FB:14a}
F.~D{\"o}rfler, J.W. Simpson-Porco, and F.~Bullo.

\bibitem{GVMdVC11}
J.M. Guerrero, J.C. Vasquez, J.~Matas, L.G. {de Vicu\~na}, and M.~Castilla.
\newblock {Hierarchical Control of Droop-Controlled AC and DC Microgrids--A
  General Approach Toward Standardization}.
\newblock {\em IEEE Transactions on Industrial Electronics}, 58(1):158--172,
  2011.

\bibitem{haimovich12:_bound-arxiv}
H.~Haimovich and M.M. Seron.
\newblock Bounds and invariant sets for a class of switching systems with
  delayed-state-dependent perturbations, 2012.
\newblock {Available at http://arxiv.org/abs/1202.0455}.

\bibitem{HaS13}
H.~Haimovich and M.M. Seron.
\newblock Bounds and invariant sets for a class of switching systems with
  delayed-state-dependent perturbations.
\newblock {\em Automatica}, 49(3):748--754, 2013.

\bibitem{Kha02}
H.~Khalil.
\newblock {\em Nonlinear Systems}.
\newblock Prentice-Hall, NJ, 3rd edition, 2002.

\bibitem{Lasseter01}
R.H. Lasseter.
\newblock Microgrids.
\newblock In {\em IEEE Power Engineering Society Winter Meeting}, volume~1,
  pages 146--149, 2001.
\newblock Panel: Role of Distributed Generation in Reinforcing the Critical
  Electric Power.

\bibitem{PMM06}
J.A. {Pe\c{c}as Lopes}, C.L. Moreira, and A.G. Madureira.
\newblock {Defining control strategies for MicroGrids islanded operation}.
\newblock {\em IEEE Transactions on Power Systems}, 21(2):916--924, 2006.

\bibitem{PLT09}
F.Z. Peng, Y.W. Li, and L.M. Tolbert.
\newblock Control and protection of power electronics interfaced distributed
  generation systems in a customer-driven microgrid.
\newblock In {\em IEEE Power \& Energy Society General Meeting, PES'09}, 2009.

\bibitem{SATS12}
J.~Schiffer, A.~Anta, T.D. {Truong}, J.~Raisch, and T.~Sezi.
\newblock On power sharing and stability in autonomous inverter-based
  microgrids.
\newblock In {\em {IEEE} Conf.\ on Decision and Control}, Maui, Hawaii, USA,
  December 2012.

\bibitem{SVG12}
Q.~Shafiee, J.C. Vasquez, and J.M. Guerrero.
\newblock Distributed secondary control for islanded {MicroGrids}---a networked
  control systems approach.
\newblock In {\em Annual Conf.\ IEEE Industrial Electronics Society}, 2012.

\bibitem{JWSP-FD-FB:12u}
J.W. {Simpson-Porco}, F.~{D{\"o}rfler}, and F.~Bullo.
\newblock Synchronization and power sharing for droop-controlled inverters in
  islanded microgrids.
\newblock {\em Automatica}, 49(9):2603--2611, 2013.

\bibitem{Ustun2011}
T.S. Ustun, C.~Ozansoy, and A.~Zayegh.
\newblock {Recent developments in microgrids and example cases around the
  world. A review}.
\newblock {\em Renewable and Sustainable Energy Reviews}, 15(8):4030--4041,
  October 2011.

\bibitem{siljak2004}
A.I. Zecevic, G.Neskovic, and D.A. Siljak.
\newblock Robust decentralized exciter control with linear feedback.
\newblock {\em IEEE Transactions on Power Systems}, 19(2):1096--1103, 2004.

\bibitem{ZhH13}
Q.-C. Zhong and T.~Hornik.
\newblock {\em Control of Power Inverters in Renewable Energy and Smart Grid
  Integration}.
\newblock John Wiley \& Sons, 2013.

\end{thebibliography}

\end{document}